\DeclareRobustCommand\onedot{\futurelet\@let@token\@onedot}
\def\@onedot{\ifx\@let@token.\else.\null\fi\xspace}
\def\eg{\emph{e.g}\onedot} 
\def\ie{\emph{i.e}\onedot} \def\Ie{\emph{I.e}\onedot}
\def\cf{\emph{cf}\onedot}
\def\st{$st~$}
\theoremstyle{definition}
\newtheorem{defn}{Definition}
\theoremstyle{plain}
\newtheorem{thm}{Theorem}
\newtheorem{lem}{Lemma}
\newtheorem{crl}{Corollary}
\theoremstyle{remark}
\newtheorem*{rmk}{Remark}
\newtheoremstyle{problem}
  {}
  {}
  {}
  {0pt}
  {\bfseries}
  {:}
  { }
  {\thmname{#1}\thmnumber{ #2}\thmnote{ (#3)}}
\newcommand{\ssl}{\mathcal{S}}
\newif\iftechrep \techrepfalse
\def\checkiftechreport#1{
\expandafter\iistechreport#1TR. \techreptrue\fi}
\def\iistechreport#1TR#2.{\def\tmp{#2}\ifx\tmp\empty\else}
\def\checkTR{\checkiftechreport{\jobname}}
\title{Robust Routing in Interdependent Networks}
\author{Jianan Zhang, and Eytan Modiano
\thanks{Part of the material in this paper was presented at IEEE International Conference on Computer Communications (INFOCOM), 2017.

The authors are with the Laboratory for Information and Decision Systems, Massachusetts Institute of Technology. This work was supported in part by DTRA grants HDTRA1-13-1-0021 and HDTRA1-14-1-0058.}}
\begin{document}
\maketitle
\begin{abstract}
We consider a model of two interdependent networks, where every node in one network depends on one or more supply nodes in the other network and a node fails if it loses all of its supply nodes.
We develop algorithms to compute the failure probability of a path, and obtain the most reliable path between a pair of nodes in a network, under the condition that each supply node fails independently with a given probability. Our work generalizes the classical shared risk group model, by considering multiple risks associated with a node and letting a node fail if all the risks occur. 
Moreover, we study the diverse routing problem by considering two paths between a pair of nodes.
We define two paths to be $d$-failure resilient if at least one path survives after removing $d$ or fewer supply nodes, which generalizes the concept of disjoint paths in a single network, and risk-disjoint paths in a classical shared risk group model. We compute the probability that both paths fail, and develop algorithms to compute the most reliable pair of paths. 
\end{abstract}

\section{Introduction}

Many modern systems are interdependent, such as smart power grids, smart transportation, and other cyber-physical systems \cite{yagan2012optimal, rosato2008modelling, gu2011onset, parandehgheibi2013robustness, parandehgheibi2015modeling}. In interdependent networks, one network depends on another to properly operate. For example, in smart grids, power generators rely on messages from the control center to adjust to the power demand fluctuations, while the control center relies on the electric power to operate. Due to the interdependence, failures in one network may cascade to another. It is important to understand the robustness of interdependent networks which are prone to cascading failures.

Most previous studies on interdependent networks have focused on the network connectivity based on random graph models, in the asymptotic regime where the number of nodes approaches infinity \cite{buldyrev2010catastrophic, shao2011cascade}. The finite-size arbitrary-topology graph models, which represent real-world communication and physical networks, have been largely overlooked in the interdependent networks literature. A few exceptions include \cite{parandehgheibi2013robustness, parandehgheibi2015modeling}, which model interdependent power grids and communication networks by graphs with topologies specified by the real networks. In \cite{zhang2018connectivity}, we model interdependent networks by graphs with specified topologies, which can be tailored for a wide range of applications, and study the connectivity in interdependent networks. 

In this paper, we study robust routing problems in interdependent networks, by characterizing the effects of node failures in one network on the nodes and pathes in the other network. For an overview of the problems and challenges, it is helpful to consider a simplified scenario where a demand network depends on a supply network, illustrated by Fig. \ref{fig:interdep}. Every node in the demand network is supported by one or more nodes in the supply network. Thus, nodes in the demand network and nodes in the supply network can be viewed as demand nodes and supply nodes, respectively. Given that a demand node fails if it loses all of its supply nodes, supply node failures may lead to correlated demand node failures, which makes it difficult to route traffic through reliable paths in the demand network. We develop techniques to tackle the failure correlation. This simplified one-way dependence exists in real-world systems. For example, routers and processors in a communication network depend on the electric power. Moreover, as we will see later, the analysis based on this simplified scenario can be applied to interdependent networks under certain assumptions.
\begin{figure}[h]
\begin{centering}
\includegraphics[width=0.85\linewidth]{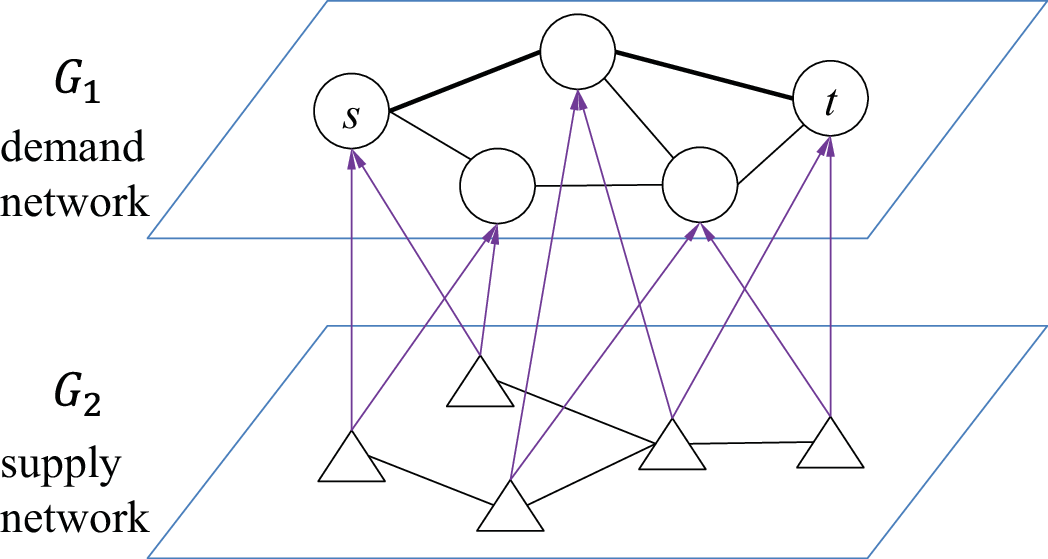}
\caption{Every node in the demand network $G_1$ is supported by two nodes in the supply network $G_2$.}
\label{fig:interdep}
\end{centering}
\end{figure}

The robust routing problems have been extensively studied under both independent failure and correlated failure scenarios. If edges or nodes fail independently, the most reliable path between a source-destination pair can be viewed as a shortest path, where the length is a function of the failure probability. In the case of correlated failures, it is difficult to find a path with any performance guarantee in general \cite{yang2015shortest}. If correlation only exists among edges or nodes that fail simultaneously, the network can be viewed using a shared risk group model (Fig.~\ref{fig:srg}) \cite{xu2004failure, coudert2007shared}. The shared risk group model captures correlated failures in an overlay network when underlay failures occur, and is commonly used to study the cross-layer reliability, such as logical link failures caused by fiber failures in optical networks \cite{hu2003diverse, yuan2005minimum, lee2010diverse, habib2013disaster}. The most reliable path contains the smallest number of risks if all risks are equally likely to occur, and can be obtained by integer programming~\cite{hu2003diverse}.

\begin{figure}[h]
\begin{centering}
\includegraphics[width=0.77\linewidth]{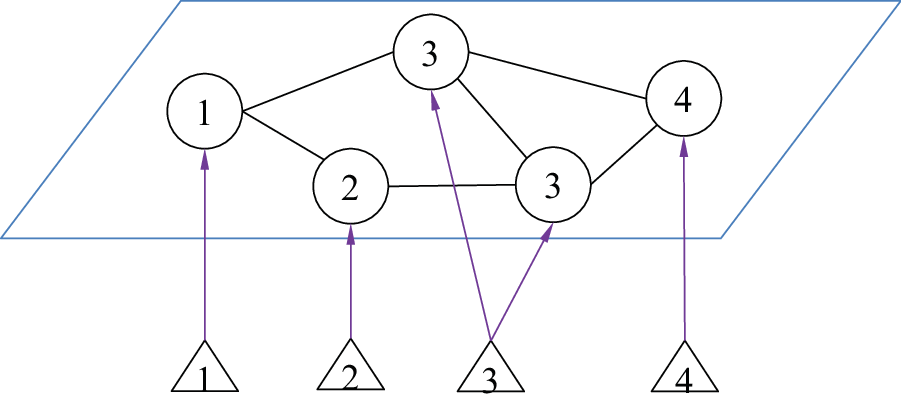} 
\caption{A shared risk node group model, where nodes labeled by the same number share the same risk.}
\label{fig:srg}
\end{centering}
\end{figure}

Interdependent networks have similarities with the classical shared risk group model, in that two demand nodes share a risk if they have at least one common supply node. However, the key difference is that a demand node does not necessarily fail if a risk occurs (\ie, a supply node fails), since a demand node may have multiple supply nodes, whereas a node fails if its associated risk occurs in the classical shared risk group model.

The reliability of a path in interdependent networks, in contrast to the classical shared risk group model, can no longer be characterized by the number of risks that the path contains. For example, if all the nodes in a path depend on a single supply node and thus the path has a single risk, removing a single supply node would disconnect the path. In contrast, if every node in a path has multiple supply nodes, the path would be more robust and can resist a larger number of supply node failures, although the path has more ``risks''.

In addition to the most reliable path, a backup path can be used to further improve reliability, through \emph{diverse routing}. 
Intuitively, a pair of reliable paths should share the minimum number of risks (or be risk-disjoint) in the shared risk group model \cite{hu2003diverse, lee2010diverse}. However, in interdependent networks, it is easy to construct examples where two paths that share many supply nodes can withstand a larger number of supply node failures than two paths that share a smaller number of supply nodes (\eg, Fig. \ref{fig:twopaths}). New metrics, other than the number of risks shared by two paths, need to be identified to characterize their reliability.
\begin{figure}[h]
\begin{centering}
\includegraphics[width=\linewidth]{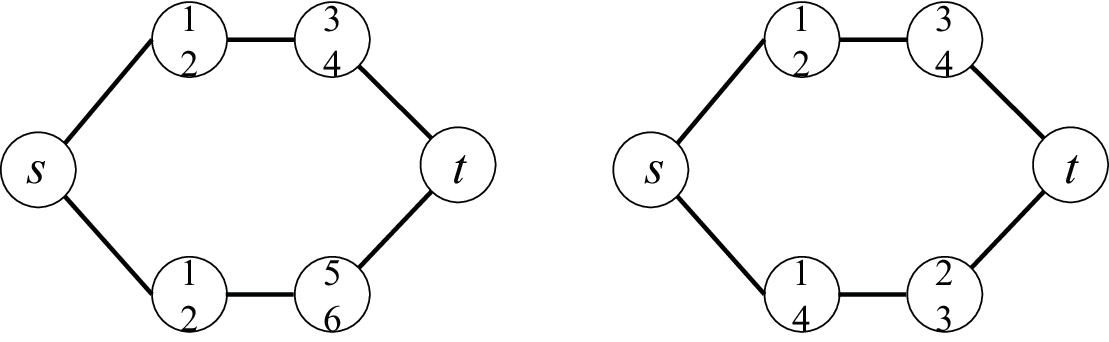}
\caption{The two numbers in each node represent its two supply nodes. The two \st paths in the left figure share two supply nodes, and can be both disconnected after removing supply nodes $\{1,2\}$. The two \st paths in the right figure share four supply nodes, but they cannot be both disconnected after removing any two supply nodes.}
\label{fig:twopaths}
\end{centering}
\end{figure}

Diverse routing problems have been studied under correlated link failures. The correlation between a pair of logical links is obtained either by measurement \cite{cui2002backup} or by analysis of the underlay physical topology \cite{zheng2014cross}. Heuristic algorithms have been developed to find multiple reliable paths, and their performance was evaluated by simulation \cite{cui2002backup, tang2005improving, Fei2006}. In contrast, we explicitly bound the gap between the failure probability and the optimization objective, and develop algorithms that have provable performance.

In this paper, we develop an analytically tractable framework to study the following robust routing problems in interdependent networks.

\texttt{Single-path routing:}
Compute the probability that a specified path fails. Obtain the most reliable path between a source-destination pair.


\texttt{Diverse routing:}
Compute the probability that two specified paths both fail. Obtain the pair of most reliable paths between a source-destination pair.

By generalizing the concept of disjoint paths to interdependent networks, we characterize the level of disjointness between two paths to study diverse routing.
In contrast to the classical shared risk group model where a node fails if its risk occurs, in interdependent networks a node fails if a combination of risks occur. In view of this, our methods extend the shared risk group model. To the best of our knowledge, this paper is the first to study the robust routing problem in interdependent networks. A preliminary version of this paper appeared in \cite{zhang2017robust}.

The rest of the paper is organized as follows. In Section \ref{sc:model}, we state our model for interdependent networks and failures. In Section \ref{sc:single}, we prove the complexity, and develop approximation algorithms to compute the path failure probability. 
In Section \ref{sc:search}, we develop algorithms to find the most reliable path between a pair of nodes. 
In Section \ref{sc:pair}, we study the diverse routing problem in interdependent networks, and find a pair of reliable paths whose failure probability is minimized. Section \ref{sc:numerical} provides numerical results.
Finally, Section \ref{sc:conclusion} concludes the paper.

\section{Model}
\label{sc:model}
We consider a demand network $G_1$ and a supply network $G_2$, where every demand node in $G_1$ depends on one or more supply nodes in $G_2$. We assume that every supply node provides substitutional supply to the demand nodes, and a demand node is functioning if it is directly connected to at least one supply node. To study the impact of node failures in $G_2$ on $G_1$, it is equivalent to study the following model.

Consider a graph $G(V,E,\mathcal S_V)$, where nodes $V$ and edges $E$ are identical to nodes and edges in $G_1$, and $\mathcal S_V$ are the supply node sets, each of which is a set of nodes in $G_2$ that provide supply to a node in $V$. In this model, each node $v_i \in V$ is a demand node, supported by a set of supply nodes $S_i \in \mathcal S_V$, and $v_i$ fails if all the nodes in $S_i$ fail. (Note that nodes $V$ may have different number of supply nodes.) Finally, let $s,t\in V$ be a source-destination pair.

Under the condition that supply nodes fail independently with given probabilities, and following the convention that $s,t$ do not fail, we study the robust routing problems in $G(V, E, \mathcal S_V)$.

\begin{rmk}
 The analysis for this model can be directly applied to interdependent networks, as long as the interdependence is bidirectional (\ie, if $v \in G_1$ depends on $u \in G_2$, then $u$ depends on $v$ as well) and failures initially occur in one network. It suffices to observe that, given a set of failed nodes $S \subseteq G_2$, a node $v \in G_1$ fails if and only if its supply nodes are all in $S$. Notice that the failure of $v$ does not further lead to node failures in $G_2$, because all the nodes that $v$ supports, which are exactly the supply nodes for $v$ due to the bidirectional interdependence, have failed.
\end{rmk}

\section{Computing the reliability of a path}
\label{sc:single}
If every node has a single supply node, the path failure probability is given by $1 - (1 - p)^r$, where each supply node fails independently with probability $p$ and the path is supported by $r$ supply nodes. In contrast, if every node has more than one supply node, computing the path failure probability becomes $\#P$-hard. The proof can be found in the Appendix.
\begin{restatable}{thm}{thmSharpP}
Computing the failure probability of a path is $\# P$-hard, if every node has two or more supply nodes and each supply node fails independently with probability $p$.
\end{restatable}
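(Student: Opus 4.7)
The plan is to establish $\#P$-hardness by polynomial-time reduction from the monotone $2$-DNF probability problem: given a monotone $2$-DNF $\phi(x_1,\dots,x_n)=\bigvee_{i=1}^{k}(x_{a_i}\wedge x_{b_i})$ and probability $p$, compute $\Pr[\phi = \mathrm{true}]$ when each variable is independently true with probability $p$. This problem is $\#P$-hard: at $p=1/2$, $2^n\cdot\Pr[\phi=\mathrm{true}]$ is the number of satisfying assignments of a monotone $2$-DNF, which is $\#P$-hard to count (equivalently, counting vertex covers of a graph, by Valiant's theorem).

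Given such a $\phi$, I would construct $G(V,E,\mathcal{S}_V)$ as follows. Place $k$ demand nodes $v_1,\dots,v_k$ in series as a single $s$-$t$ path with non-failing endpoints $s,t$ attached at the two ends. Introduce $n$ distinct supply nodes $y_1,\dots,y_n$, one per variable, each failing independently with probability $p$, and set $S_i = \{y_{a_i}, y_{b_i}\}$ for every $i$, so each demand node has exactly two supply nodes. Identify the event ``$y_j$ fails'' with ``$x_j = \mathrm{true}$''. The construction is polynomial in $|\phi|$, and crucially the same supply node $y_j$ is reused across every clause containing $x_j$, so the failure correlations between distinct $v_i$'s along the path exactly match those induced by the formula.

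Under this identification, $v_i$ loses all of its supply iff both $y_{a_i}$ and $y_{b_i}$ fail, which is precisely when the $i$-th clause is satisfied; the path fails iff some $v_i$ loses all its supply, which is precisely when $\phi$ is satisfied. Hence $\Pr[\text{path fails}] = \Pr[\phi = \mathrm{true}]$, so any polynomial-time algorithm for the former would solve the $\#P$-hard latter. Since every demand node in the construction has exactly two supply nodes, this already proves the claim in the restricted regime ``two or more''. I do not anticipate a major technical obstacle; the main care points are selecting the cleanest $\#P$-hard reference problem and, if the theorem is read as asserting hardness for all $p$, observing that $\Pr[\text{path fails}]$ is a polynomial in $p$ of degree at most $|\bigcup_i S_i|$, so polynomial interpolation lifts hardness from $p=1/2$ to any fixed $p \in (0,1)$.
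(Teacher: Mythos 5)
Your proof is correct and follows essentially the same route as the paper's: a reduction from monotone 2-DNF counting in which each clause becomes a demand node on the path and its two literals become that node's supply nodes, with the count recovered by evaluating the failure probability at $p=1/2$. Your closing remark about polynomial interpolation to extend hardness to arbitrary fixed $p$ is a small bonus the paper does not include, but the core argument is identical.
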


Although it is $\#P$-hard to compute, the path failure probability can be well approximated. We apply the solution to the \emph{DNF probability} problem and propose an $(\epsilon, \delta)$-approximation algorithm based on importance sampling, which approximates the path failure probability to within a multiplicative factor $1 \pm \epsilon$ with probability at least $1 - \delta$. 

The DNF probability problem computes the probability that a Disjunctive Normal Form (DNF) formula is true, when literals are set to be true independently with given probabilities. A DNF formula is a disjunction of clauses, each of which is a conjunction of literals, and takes the following form: $(x_1^1 \land \dots \land x_{n_1}^1)\lor (x_1^2 \land  \dots \land x_{n_2}^2) \lor \dots \lor (x_1^m \land \dots \land  x_{n_m}^m)$. Let $v_1 - \dots - v_m$ be a path in $G(V,E,\ssl_V)$.
The key observation is that computing the path failure probability can be formulated by a DNF probability problem, in which a clause $C_i$ represents a node $v_i$ in the path and the literals $x_j^i$ in clause $C_i$ represent the supply nodes of $v_i$. 
For completeness, we state Algorithm \ref{al:sampling} that approximates the path failure probability, by adapting the algorithm that approximates the DNF probability in \cite{karp1989monte}.


\begin{algorithm}[h]
\caption{Estimating the path failure probability based on importance sampling.}
\label{al:sampling}
\textbf{Initialization:}
\begin{enumerate}
\item
 Given a path $\{ v_1,  v_2, \dots,  v_{ m} \}$, let $\{ u_j^i | j = 1,\dots, n_s( v_i) \}$ denote the set of supply nodes of $ v_i$, where $n_s( v_i)$ is the number of supply nodes of $ v_i$.
\end{enumerate}

\textbf{Main loop:}
\begin{enumerate}
\setcounter{enumi}{1}
\item Among $\{ v_1,  v_2, \dots,  v_{ m} \}$, randomly choose $ v_i$ with probability $$\prod_{1 \leq j \leq n_s( v_i)} p(u_j^i) / \sum_{1 \leq k \leq  m} \prod_{1 \leq j \leq n_s( v_k)} p(u_j^k).$$ If every demand node has an identical number of supply nodes, and the supply node failure probability $p(u_j^i)$ is identical, then node $ v_i$ is chosen with probability $1 /  m$.
\item If $ v_i$ is chosen, set all of its supply nodes $\{ u_j^i | j = 1,\dots, n_s( v_i) \}$ to be failed. The other supply nodes are randomly set to be failed with their respective failure probabilities. Let $U$ denote the set of failed supply nodes.
\item Test whether $ v_i$ is the first failed node among $\{ v_1,  v_2, \dots,  v_{ m} \}$, given that $U$ fail (and no other supply nodes fail). If true, set $I = 1$; otherwise, set $I = 0$. Repeat the loop for $a = 3 m \ln(2/\delta) / \epsilon^2$ iterations.
\end{enumerate}

\textbf{Result:}
\begin{enumerate}
\setcounter{enumi}{4}
\item Count the number of $I = 1$ and denote the number by $b$. An $(\epsilon, \delta)$-approximation of the path failure probability is given by $b/a \sum_{1 \leq k \leq m} \prod_{1 \leq j \leq n_s(v_k)} p(u_j^k)$.
\end{enumerate}
\end{algorithm}

The intuition behind this importance sampling algorithm is as follows. Some events, although rare, are important in determining the path failure probability, especially when the path failure probability is small. The algorithm samples in a space consisting of important events, each of which is a set of supply node failures $U$ that lead to the path failure. In this space, the failure of $U$ may appear multiple times, given that multiple choices of $v_i$ in Step 2 may lead to the same $U$ in Step 3. The algorithm then remove the duplicated $U$ via sampling in Step~4.


To prove the correctness of the algorithm, we take the following two steps. First, following a similar analysis to \cite{karp1989monte}, we prove that the path failure probability is given by $E[I] \sum_{1 \leq k \leq m} \prod_{1 \leq j \leq n_s(v_k)} p(u_j^k)$, where $E[I]$ is the expectation of $I$ in Step 4 of the algorithm. Second, by repeating the loop a sufficiently large number of times, $E[I]$ can be approximated to within factor $1 \pm \epsilon$ with probability at least $1 - \delta$. 
The details of the proof can be found in the Appendix.

The advantage of this algorithm over a na\"{\i}ve Monte-Carlo algorithm (\eg, by repeatedly simulating the supply node failure events and counting the fraction of trials in which the path fails) is that the number of iterations in the na\"{\i}ve Monte-Carlo algorithm is large when the path failure probability is small\footnote{If $F$ occurs in $b$ out of $a$ trials, $\Pr(F) \in (1 \pm \epsilon) b/a$ with probability $1 - \delta$, under the condition that $b=\Omega(\ln(1/\delta) / \epsilon^2)$. The total number of trials $a = \Omega(\ln(1/\delta) / \epsilon^2)/\Pr(F)$ is large when $\Pr(F)$ is small.}. In contrast, by sampling in a more important space, the number of iterations is reduced. Note that the only quantity that needs to be estimated in Algorithm \ref{al:sampling} by simulation is $E[I]$, and that $\Pr(I = 1) \geq 1/m$. We conclude this section by the following theorem, whose proof is in the Appendix.
\begin{restatable}{thm}{thmtime} \label{th:time}
  The path failure probability can be estimated to within a multiplicative factor $1 \pm \epsilon$ with probability $1 - \delta$, in time $O(m^2 n_s \ln(1/\delta) / \epsilon^2)$, where $m$ is the path length and $n_s$ is the maximum number of supply nodes for a demand node.
\end{restatable}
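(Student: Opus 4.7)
The plan is to factor the runtime into (i) the number of iterations of the main loop needed to achieve the desired $(\epsilon,\delta)$-approximation, and (ii) the cost of a single iteration; everything else is importing the correctness statement that has already been argued.

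For (i), the iterates $I_1,\dots,I_a$ produced in independent executions of Steps 2--4 are i.i.d.\ Bernoulli random variables with common mean $\mu := E[I]$. The correctness discussion preceding the theorem already establishes that $\mu \cdot W = \Pr(\text{path fails})$, where $W = \sum_{1\le k\le m}\prod_{1\le j\le n_s(v_k)} p(u_j^k)$, and that $\Pr(I=1) = \mu \geq 1/m$ thanks to the biased choice of $v_i$ in Step 2. I would then invoke the standard multiplicative Chernoff bound: for i.i.d.\ Bernoulli$(\mu)$ trials, taking $a \geq 3\ln(2/\delta)/(\epsilon^2 \mu)$ guarantees $\Pr\!\left(|b/a - \mu| > \epsilon\mu\right) \le \delta$. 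Substituting the lower bound $\mu \ge 1/m$, the prescribed choice $a = 3m\ln(2/\delta)/\epsilon^2$ is sufficient. Since the reported estimator in Step 5 is $(b/a)\,W$ and $\mu W$ equals the true path failure probability, this immediately transfers the relative-error guarantee from $\mu$ to $\Pr(\text{path fails})$.

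For (ii), I would precompute the per-node weights $\pi_k := \prod_{1\le j\le n_s(v_k)} p(u_j^k)$ and their sum $W$ once, at initialization, in $O(m\,n_s)$ time. Each iteration then costs $O(m\,n_s)$ in elementary operations: sampling $v_i$ in Step 2 from the distribution $\pi_k/W$ takes $O(m)$ using the precomputed weights; generating the failure set $U$ in Step 3 requires one forced assignment on the $n_s(v_i)$ supply nodes of $v_i$ and one Bernoulli draw for each of the at most $m\,n_s$ remaining supply nodes on the path; and the ``first failed node'' test in Step 4 scans at most $m$ path nodes, each checked in $O(n_s)$ by verifying whether all of its supply nodes are in $U$. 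Multiplying the two bounds yields the claimed total $O(a\cdot m\,n_s) = O(m^2 n_s \ln(1/\delta)/\epsilon^2)$.

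The genuinely delicate piece is the lower bound $\mu \ge 1/m$, since the rest of the argument is a Chernoff bound plus bookkeeping. That bound is exactly what justifies importance sampling over na\"ive Monte Carlo: the Step 2 distribution over-weights each node by its own ``failure mass'' $\pi_k$, so that every supply-node failure configuration causing the path to fail is charged to a single ``first failed'' representative on the path, and the weighted-to-unweighted conversion loses only a factor of at most $m$. This bound has already been established in the preceding DNF-probability adaptation, so for the present theorem I would simply cite it, apply the Chernoff inequality, and sum up the per-iteration costs as above.
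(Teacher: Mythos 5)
Your proposal is correct and follows essentially the same route as the paper: it imports the correctness facts $E[I]\cdot W = \Pr(F)$ and $\Pr(I=1)\ge 1/m$ from the preceding importance-sampling analysis, applies the multiplicative Chernoff bound to conclude that $a = 3m\ln(2/\delta)/\epsilon^2$ iterations suffice, and bounds the per-iteration cost by $O(m\,n_s)$ across Steps 2--4. The only cosmetic difference is that you make the precomputation of the weights $\pi_k$ and $W$ explicit, which slightly sharpens the Step~2 sampling cost to $O(m)$ but does not change the overall bound.
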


Although the failure probability of a specific path can be well approximated by the importance sampling algorithm, the algorithm hardly gives an intuition for path properties that characterize a reliable path. In the remainder of this section, we develop indicators and bounds on the path failure probability, which can be used for finding the most reliable path.

\subsection{Small and identical failure probability}
Consider a path $v_1 - \dots - v_m$ in $G(V,E,\ssl_V)$. Let $F_i$ denote the event that all the supply nodes of $v_i$ fail. Let $F$ denote the event that the path fails. Clearly, the path fails if at least one node $v_i$ loses all of its supply nodes ($F = \cup_{1 \leq i \leq m} F_i$). 

By the inclusion-exclusion principle, we have
\begin{align}
\Pr (F) &= \sum_{1 \leq i \leq m} \Pr(F_i) - \sum_{1 \leq i_1 < i_2 \leq m} \Pr(F_{i_1} \cap F_{i_2}) \nonumber \\
& + \dots + (-1)^{m-1} \Pr(F_1 \cap F_2 \dots \cap F_m).
\label{eq:inc}
\end{align}
Directly computing the path failure probability is difficult, given that there are ${m \choose j}$ summations in the $j$-th term of the inclusion-exclusion formula. We first reduce the number of events in the inclusion-exclusion formula, and then further simplify the computation under the condition that the supply node failure probability is small and identical.

To reduce the number of events, some \emph{redundant} events can be ignored. For example, if $F_i$ occurs only if $F_j$ occurs, then the event $F_i$ is redundant in determining $F$ with the knowledge of $F_j$. To see this, note that 1) if $F_j$ occurs, then the path fails regardless of $F_i$; 2) if $F_j$ does not occur, then $F_i$ does not occur as well. If the supply nodes of $v_j$ form a subset of the supply nodes of $v_i$, then $F_i$ is redundant. With an abuse of language, we call a node $v_i$ redundant if $F_i$ (\ie, the state of $v_i$) is redundant. With this simplification, we derive the following result.


Let $n_s(v_i)$ denote the number of distinct supply nodes of $v_i$. Let $n_s^{\min} = \min_{1 \leq i \leq m} n_s(v_i)$. After removing the redundant nodes sequentially, let $\bar m$ be the number of remaining nodes that each have $n_s^{\min}$ supply nodes. The path failure probability can be estimated by the following theorem.
\begin{thm}
\label{th:singleGeneral}
  If every supply node fails independently with probability $p \leq \epsilon/m$, then the path failure probability satisfies $(1 - \epsilon) \bar m p^{n_s^{\min}} \leq \Pr(F) \leq (1 + \epsilon) \bar m p^{n_s^{\min}}$.  
\end{thm}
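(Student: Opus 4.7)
The plan is to apply truncated inclusion--exclusion (Bonferroni's inequalities), using the union bound in one direction and going to second order in the other, after carefully grouping the nodes by how many supply nodes they have. Let $A \subseteq \{v_1,\dots,v_m\}$ be the set of \emph{non-redundant} nodes with exactly $n_s^{\min}$ supply nodes (so $|A| = \bar m$), and let $B$ be the remaining non-redundant nodes, all of which have at least $n_s^{\min}+1$ supply nodes. Since dropping a redundant $v_i$ does not change $F = \cup_i F_i$, it suffices to analyze $F = F_A \cup F_B$, where $F_A = \cup_{i \in A} F_i$ and analogously for $B$.

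The key structural observation I would establish first is that for any two distinct remaining (non-redundant) nodes $v_i, v_j$, neither $S_i \subseteq S_j$ nor $S_j \subseteq S_i$ holds --- otherwise one of them would have been removed during the sequential redundancy elimination. Consequently $|S_i \cup S_j| \geq \max(|S_i|,|S_j|) + 1 \geq n_s^{\min} + 1$, so that $\Pr(F_i \cap F_j) = p^{|S_i \cup S_j|} \leq p^{n_s^{\min}+1}$. A small sanity check also shows $\bar m \geq 1$: the minimum $n_s^{\min}$ is attained by some $v_i$, and this $v_i$ cannot be removed, since any $v_j$ making it redundant would have $|S_j| \leq |S_i| = n_s^{\min}$, forcing $S_j = S_i$, in which case the sequential process keeps one representative.

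For the upper bound I would simply apply the union bound: $\Pr(F) \leq \sum_{i \in A} \Pr(F_i) + \sum_{i \in B} \Pr(F_i) = \bar m\, p^{n_s^{\min}} + \sum_{i \in B} p^{n_s(v_i)}$. Each term in the second sum is bounded by $p^{n_s^{\min}+1}$ and there are at most $m$ such terms, so using $p \leq \epsilon/m$ the second sum is at most $\epsilon\, p^{n_s^{\min}}$. Combining with $\bar m \geq 1$ gives $\Pr(F) \leq (\bar m + \epsilon)\, p^{n_s^{\min}} \leq (1+\epsilon)\, \bar m\, p^{n_s^{\min}}$.

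For the lower bound --- which is the delicate step, and where I expect the main obstacle to arise --- I would discard $B$ entirely and write $\Pr(F) \geq \Pr(F_A)$, then apply the second-order Bonferroni inequality restricted to $A$: $\Pr(F_A) \geq \sum_{i \in A} \Pr(F_i) - \sum_{\{i,j\} \subseteq A} \Pr(F_i \cap F_j) \geq \bar m\, p^{n_s^{\min}} - \binom{\bar m}{2} p^{n_s^{\min}+1}$. The crucial point is that restricting to $A$ replaces $\binom{m}{2}$ by $\binom{\bar m}{2}$ in the correction, so the correction is at most $\frac{\bar m^2}{2} p \cdot p^{n_s^{\min}} \leq \frac{\bar m \cdot mp}{2} p^{n_s^{\min}} \leq \frac{\bar m \epsilon}{2} p^{n_s^{\min}}$, yielding $\Pr(F) \geq \bar m\, p^{n_s^{\min}} (1 - \epsilon/2) \geq (1-\epsilon)\, \bar m\, p^{n_s^{\min}}$. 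The obstacle is precisely recognizing that the lower bound must be taken over $A$ rather than all remaining nodes: summing over all non-redundant nodes would give a correction of order $\binom{m}{2} p \cdot p^{n_s^{\min}} \sim m\epsilon\, p^{n_s^{\min}}$, which is too large relative to $\bar m\, p^{n_s^{\min}}$ whenever $\bar m \ll m$, and the incomparability property of supply sets on $A$ is exactly what makes the restricted bound tight enough.
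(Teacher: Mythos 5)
Your proof is correct and follows essentially the same route as the paper: sequential redundancy elimination, the union bound for the upper bound, and a second-order Bonferroni truncation for the lower bound. The only divergence is that you discard the class $B$ before applying Bonferroni, whereas the paper keeps all non-redundant nodes and observes that any pair involving a node of $B$ has a supply-set union of size at least $n_s^{\min}+2$, so those pairs contribute only $\binom{m}{2}p^{n_s^{\min}+2} \leq \tfrac{\epsilon^2}{2}p^{n_s^{\min}}$ --- hence, contrary to your closing remark, restricting the second-order sum to $A$ is a convenience rather than a necessity.
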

\begin{proof}
  We first reduce the number of failure events that appear in the inclusion-exclusion formula by removing the redundant nodes. Note that determining whether a node is redundant and removing the redundant node are done sequentially. Thus, among the set of nodes that have the same supply nodes, one node remains. Let $D$ denote the nodes in the path excluding the redundant nodes.

  First, we consider the first term in Eq. (\ref{eq:inc}) that provides an upper bound on the path failure probability, known as the union bound. Let $D_1 \subset D$ denote the set of nodes that each have $n_s^{\min}$ supply nodes, and let $\bar m = |D_1|$. The remaining nodes $D_2 = D \setminus D_1$ each have $n_s^{\min}+1$ or more supply nodes. Thus, the first term of Eq. (\ref{eq:inc}) is at most
  \begin{align*}
    \Pr(F) \leq & \bar m p^{n_s^{\min}} + (m - \bar m) p p^{n_s^{\min}} \\
    \leq & \bar m p^{n_s^{\min}} + \epsilon p^{n_s^{\min}},
  \end{align*}
  for $p \leq \epsilon/m$.

  Next, we consider the first two terms that provide a lower bound on the path failure probability (\cf Bonferroni inequalities). For any pair of nodes $v_j, v_k \in D$, the union of their supply node sets contains at least $\max(n_s(v_j), n_s(v_k)) + 1$ nodes, because neither supply node set includes the other as a subset. At least $n_s^{\min} + 1$ supply nodes have to be removed in order for a pair of nodes in $D_1$ to fail. At least $n_s^{\min} + 2$ supply nodes need to be removed in order for a pair of nodes to fail if at least one node belongs to $D_2$. The absolute value of the second term is at most ${\bar m \choose 2} p^{n_s^{\min} + 1} + [{m \choose 2} -  {\bar m \choose 2}] p^{n_s^{\min} + 2}$. A lower bound on $\Pr(F)$ is
  \begin{align*}
    \Pr(F) \geq & \bar m p^{n_s^{\min}} - \Big(\frac{\bar m^2}{2} p p^{n_s^{\min}} + \frac{m^2}{2} p^2 p^{n_s^{\min}} \Big) \\
    \geq & \bar m p^{n_s^{\min}} - \epsilon \bar m p^{n_s^{\min}},
  \end{align*}
  for $p \leq \epsilon/m$.
\end{proof}

For the special case where every node in the path has the same number $n_s$ of distinct supply nodes, let $\bar m$ be the number of nodes, in the path, among which no pair of nodes share the same set of $n_s$ supply nodes. The following stronger result can be proved in a similar approach.
\begin{crl}
\label{th:single}
If every supply node fails independently with probability $p \leq 2 \epsilon /\bar m$, then the path failure probability satisfies $(1 - \epsilon)\bar m p^{n_s} \leq \Pr(F) \leq \bar m p^{n_s}$.
\end{crl}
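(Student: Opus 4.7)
The plan is to follow the same inclusion-exclusion strategy used in the proof of Theorem \ref{th:singleGeneral}, but to exploit the uniform size assumption to tighten both the upper and lower bounds. The first step is to understand what redundancy means in this restricted setting. Since every $v_i$ has a supply set of exactly $n_s$ elements, one supply set is a subset of another if and only if the two sets are equal. Hence after sequentially removing redundant nodes, the remaining $\bar m$ nodes have pairwise distinct supply sets, each of cardinality $n_s$, which matches the definition of $\bar m$ given in the statement.

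Let $F_i$ be the event that all supply nodes of $v_i$ fail, and $F = \bigcup_i F_i$ the path failure event. The key structural fact is that for any two of the $\bar m$ retained nodes, the union of their supply sets has at least $n_s + 1$ elements (two distinct $n_s$-subsets cannot coincide), so $\Pr(F_i \cap F_j) \leq p^{n_s+1}$. For the upper bound I would apply the union bound directly: $\Pr(F) \leq \sum_i \Pr(F_i) = \bar m p^{n_s}$. Unlike in Theorem \ref{th:singleGeneral}, there are no surviving nodes with larger supply sets contributing an extra $\epsilon p^{n_s^{\min}}$ term, so the clean bound $\bar m p^{n_s}$ (no multiplicative slack) falls out immediately.

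For the lower bound, the Bonferroni inequality gives $\Pr(F) \geq \sum_i \Pr(F_i) - \sum_{i<j} \Pr(F_i \cap F_j) \geq \bar m p^{n_s} - \binom{\bar m}{2} p^{n_s+1}$. Using $\binom{\bar m}{2} \leq \bar m^2/2$ together with the hypothesis $p \leq 2\epsilon/\bar m$, the subtracted term is at most $\tfrac{\bar m^2}{2}\cdot p \cdot p^{n_s} \leq \epsilon \bar m p^{n_s}$, whence $\Pr(F) \geq (1-\epsilon)\bar m p^{n_s}$.

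I do not foresee a real obstacle: the argument is essentially a specialization of Theorem \ref{th:singleGeneral} in which the homogeneity of supply set sizes kills both the second summand in the union bound (tightening the upper side) and the cross-term involving nodes of larger degree (loosening the requirement on $p$ from $\epsilon/m$ to $2\epsilon/\bar m$). The only mild subtlety worth writing carefully is justifying that redundancy under equal cardinalities collapses to equality of supply sets, which is what makes $\bar m$ equal to the number of distinct $n_s$-supply-sets appearing on the path.
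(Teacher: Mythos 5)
Your proposal is correct and follows essentially the same route as the paper's own proof: collapse nodes with identical supply sets (the only form of redundancy when all supply sets have size $n_s$), take the first inclusion-exclusion term $\bar m p^{n_s}$ as the upper bound, and apply the Bonferroni lower bound with $\binom{\bar m}{2}p^{n_s+1} \leq \epsilon \bar m p^{n_s}$ under $p \leq 2\epsilon/\bar m$. No substantive differences.
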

\begin{proof}
 If node $v_i$ and $v_j$ in the path share the same set of supply nodes, then $v_i$ and $v_j$ must fail simultaneously, and $F = \cup_{1 \leq i \leq m} F_i = \cup_{1 \leq i \leq m, i \neq j} F_i$. Thus, in the calculation of path failure probability $\Pr(F)$, nodes that have the same set of supply nodes can be represented by a single node.

 Let $\{\bar v_1, \bar v_2, \dots, \bar v_{\bar m}\}$ denote the nodes in the path such that the supply nodes of $\bar v_i$ differ from the supply nodes of $\bar v_j$ by at least one supply node, $i \neq j$, $\bar m \leq m$. The first term of Eq. (\ref{eq:inc}) is $\bar m p^{n_s}$, since every node has $n_s$ distinct supply nodes and the probability that a node fails is $ p^{n_s}$. Moreover, the union of the supply nodes of $\bar v_i$ and $\bar v_j$ has size at least $n_s + 1$, and the probability that both $\bar v_i$ and $\bar v_j$ fail (because of their supply nodes' failures) is at most $p^{n_s + 1}$. The absolute value of the second term of Eq. (\ref{eq:inc}) is at most ${ \bar m \choose 2 } p^{n_s + 1} \leq \bar m^2 p^{n_s + 1} / 2 \leq \epsilon \bar m p^{n_s}$, for $p \leq 2 \epsilon / \bar m$. Therefore, $\Pr(F) \in [(1 - \epsilon) \bar m p^{n_s}, \bar m p^{n_s}]$ given that $p \leq 2 \epsilon/\bar m$.
\end{proof}

Thus, we have obtained the following two \emph{reliability indicators} for a path. These combinatorial properties are useful in finding a reliable path, which will be studied in the next section.
\begin{itemize}
  \item $n_s^{\min}$: the minimum number of distinct supply nodes for a node in the path.
  \item $\bar m$: the number of combinations of $n_s^{\min}$ supply node failures that lead to the failure of at least one node in the path.
\end{itemize}

\subsection{Arbitrary failure probability}
\label{sc:oneDNF}
In contrast with the case where supply node failure probability is small and identical, it is difficult to characterize the reliability of a path by its combinatorial properties, with limited knowledge of node failure probabilities. Therefore, we obtain bounds on path failure probability that will be useful in finding a reliable path.

First, we develop an upper bound on the path failure probability. Let $p(v_i)$ be the failure probability of node $v_i$, under the condition that each of its supply nodes $u_j^i$ fails independently with probability $p(u_j^i)$. The path failure probability, under the condition that the failures of $V$ are positively correlated, is no larger than the path failure probability by assuming that the failures of $V$ are independent.
\begin{lem}
\label{th:upper}
   The failure probability of a path $P$ where a supply node $u_j^i$ fails independently with probability $p(u_j^i)$ is upper bounded by $1-\prod_{v_i \in P}(1-p(v_i))$.
\end{lem}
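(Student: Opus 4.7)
The plan is to rewrite the path failure event as a complement and then invoke a positive-correlation (FKG/Harris) argument on the supply-node indicators. Let $X_u \in \{0,1\}$ denote the indicator that supply node $u$ is \emph{operational}, so the $X_u$'s are independent Bernoulli random variables. For each node $v_i \in P$, let
\[
A_i \;=\; \bigl\{v_i \text{ survives}\bigr\} \;=\; \Bigl\{\max_{j}\, X_{u_j^i} = 1\Bigr\},
\]
i.e.\ the event that at least one supply node of $v_i$ is operational. Then $\Pr(A_i) = 1 - p(v_i)$, and the path survives iff $\bigcap_{v_i \in P} A_i$ occurs, so
\[
\Pr(F) \;=\; 1 - \Pr\Bigl(\bigcap_{v_i \in P} A_i\Bigr).
\]
The claim reduces to showing $\Pr(\cap_i A_i) \geq \prod_i \Pr(A_i) = \prod_i (1 - p(v_i))$.

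The key observation is that each $A_i$ is a monotone \emph{increasing} event in the vector $(X_u)_u$: adding more operational supply nodes can only help $v_i$ survive. Since the $X_u$'s are independent Bernoullis, the Harris (FKG) inequality gives that any finite collection of increasing events is positively associated, so
\[
\Pr\Bigl(\bigcap_{i=1}^{m} A_i\Bigr) \;\geq\; \prod_{i=1}^{m} \Pr(A_i),
\]
which is exactly what we need. Rearranging yields $\Pr(F) \leq 1 - \prod_{v_i \in P}(1 - p(v_i))$.

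If one prefers an elementary derivation to avoid citing Harris/FKG, I would do induction on the number of supply nodes shared between $v_1$ and the remaining nodes $v_2,\dots,v_m$. Conditioning on $X_u$ for a single shared supply node $u$ preserves independence of the remaining $X$'s and preserves the monotone structure of each $A_i$, so the inductive hypothesis applies in both conditional worlds; taking a convex combination and using that the product bound is preserved under such conditioning closes the induction. The only obstacle worth flagging is verifying the monotonicity carefully in the presence of overlapping supply-node sets among distinct $v_i$'s — once that is made precise, the Harris step (or the elementary induction) is straightforward, and the equality case is seen to correspond exactly to the situation in which the sets $\{u_j^i\}_j$ are disjoint across $i$.
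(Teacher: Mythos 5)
Your proof is correct and follows essentially the same route as the paper's: both rest on the fact that the node-survival events are positively correlated, so the joint survival probability is at least the product $\prod_{v_i \in P}(1-p(v_i))$ of the marginals. If anything, your version is the more rigorous one --- the paper simply asserts that survival events sharing a supply node are positively correlated and then applies the chain rule of conditional probabilities, whereas you identify each $A_i$ as an increasing event in the independent supply-node indicators and invoke Harris/FKG, which is precisely the justification the paper's assertion tacitly relies on.
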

\begin{proof}
  If nodes $v_i$ and $\cup_{k} v_k$ do not share any supply node, then the event that $v_i$ survives and the event that $\cup_{k} v_k$ survive are independent. Otherwise, if they share one or more common supply nodes, the two events are positively correlated. Therefore,
  $$\Pr(v_i \text{~and } \cup_k v_k \text{~survive}) \geq \Pr(v_i \text{~survives}) \Pr(\cup_k v_k \text{~survive}),$$
  and
  $$\Pr(v_i \text{~survives}| \cup_k v_k \text{~survive}) \geq 1 - p(v_i).$$

  The reliability of a path $P=v_1-v_2-\dots-v_m$ is given by
  \begin{align*}
    \Pr(P \text{~survives}) & = \Pr(\cup_{k \in \{1,\dots,m\}}v_k \text{~survive}) \\
    \hspace{-4mm} & \hspace{-4mm}= \Pr(v_1 \text{~survives}) \Pr(v_2 \text{~survives}|v_1 \text{~survives}) \\
    & \hspace{-4mm} \dots \Pr(v_m \text{~survives}|\cup_{k \in \{1,\dots,m-1\}}v_k \text{~survive}) \\
    & \hspace{-4mm}\geq  \prod_{v_i \in P} (1 - p(v_i)).
  \end{align*}
\end{proof}

Then, we develop a lower bound on the path failure probability. The intuition is as follows. After replacing a supply node that supports multiple demand nodes by multiple independent supply nodes with sufficiently small failure probability, the path failure probability does not increase. 
In the original graph $G(V,E,\ssl_V)$, consider a node $v_i \in V$. Let $U^i$ denote the set of supply nodes of $v_i$, let $u_j^i \in U^i$ denote one supply node, let $p(u_j^i)$ denote the failure probability of $u_j^i$, and let $n_d(u_j^i)$ denote the number of nodes that $u_j^i$ supports. Let $\tilde p(v_i) = \prod_{u_j^i \in U^i} \tilde p(u_j^i)$ denote the failure probability of $v_i$ if $u_j^i$ fails independently with probability $\tilde p(u_j^i) = 1 - (1 - p(u_j^i))^{1/n_d(u_j^i)}$. A lower bound on the path failure probability is as follows, whose proof follows a similar technique in \cite{gatterbauer2014oblivious} and is in the technical report.
\begin{lem}
\label{th:lower}
  The failure probability of a path $P$ where a supply node $u_j^i$ fails independently with probability $p(u_j^i)$ is lower bounded by $1-\prod_{v_i \in P}(1-\tilde p(v_i))$.
\end{lem}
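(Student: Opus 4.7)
The plan is to establish the bound via a coupling that presents the same collection of independent Bernoulli variables under two different interpretations---one matching the original correlated model, and one matching an independent surrogate model whose path failure probability is exactly $1-\prod_{v_i\in P}(1-\tilde p(v_i))$. For each physical supply node $u$ that supports $n_d(u)$ demand nodes, I will introduce mutually independent Bernoullis $X_{u,1},\dots,X_{u,n_d(u)}$ with $\Pr(X_{u,k}=1)=\tilde p(u)$. The defining identity $(1-\tilde p(u))^{n_d(u)}=1-p(u)$ is precisely what makes both interpretations consistent with their marginals.

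In the original interpretation I declare $u$ to fail iff at least one $X_{u,k}=1$; then $\Pr(u\text{ fails})=1-(1-\tilde p(u))^{n_d(u)}=p(u)$, and failures of distinct physical supply nodes involve disjoint $X$ variables and are therefore independent, so this reproduces the distribution of the original model. A demand node $v_i$ fails iff, for every $u\in U^i$, there exists some $k$ with $X_{u,k}=1$. In the modified interpretation I assign each of the $n_d(u)$ demand nodes supported by $u$ a distinct copy index, writing $c(v_i,u)$ for the index assigned to $v_i$, and declare $v_i$ to fail iff $X_{u,c(v_i,u)}=1$ for every $u\in U^i$. Marginally, $\Pr(v_i\text{ fails in modified})=\prod_{u\in U^i}\tilde p(u)=\tilde p(v_i)$, and because distinct demand nodes use disjoint copies, the failure events of different $v_i$ in the modified interpretation are mutually independent.

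I then compare the two path survival events sample-wise under the coupling. The original survival event for $v_i$ requires, for some $u\in U^i$, that every $X_{u,k}=0$; the modified survival event requires, for some $u\in U^i$, only that the single assigned copy $X_{u,c(v_i,u)}=0$. The former clearly implies the latter, so
\begin{equation*}
\{P\text{ survives (original)}\}\ \subseteq\ \{P\text{ survives (modified)}\}
\end{equation*}
holds pointwise. Taking probabilities and using independence across $v_i$ on the right hand side yields $\Pr(P\text{ survives})\leq\prod_{v_i\in P}(1-\tilde p(v_i))$, which rearranges to the claimed lower bound on $\Pr(P\text{ fails})$.

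The main obstacle is arranging the coupling cleanly so that the same underlying $X_{u,k}$'s produce the correct joint law for both interpretations---in particular, verifying that the aggregated events $\{u\text{ fails}\}=\{\max_k X_{u,k}=1\}$ are independent across $u$ with the correct marginal $p(u)$, and that the copy-index assignment in the modified interpretation is well-defined. Both are immediate: independence across $u$ follows from disjoint variable supports, the marginal follows from the defining equation for $\tilde p(u)$, and the copy assignment works because $u$ supports exactly $n_d(u)$ demand nodes while we created exactly $n_d(u)$ copies. Once these bookkeeping facts are in place, the inequality reduces to the elementary set inclusion above.
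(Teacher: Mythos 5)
Your proof is correct. The surrogate model you construct is exactly the paper's (each shared supply node $u$ is dissociated into independent copies failing with probability $\tilde p(u)=1-(1-p(u))^{1/n_d(u)}$, one per supported demand node), but your argument for why the surrogate's failure probability lower-bounds the true one is genuinely different. The paper replaces one supply node at a time: it conditions on the realization of all other supply nodes, splits into three cases (path survives regardless, fails regardless, or survives iff $u_j^i$ survives), checks that the survival probability does not decrease in the third case because $(1-p(u))^{n_d(u,P)/n_d(u)}\geq 1-p(u)$, and then invokes the law of total probability and iterates over all supply nodes. You instead build a single coupling in which the original model is recovered by aggregating the copies ($u$ fails iff some copy fails) and the surrogate by assigning each demand node its own copy, and you observe the pointwise inclusion $\{P\text{ survives (original)}\}\subseteq\{P\text{ survives (modified)}\}$, since ``all copies of $u$ survive'' implies ``my copy of $u$ survives.'' This buys a one-shot, case-free argument that makes the stochastic domination transparent and handles all supply nodes simultaneously; the paper's step-by-step replacement is longer but more modular, in that it certifies each individual dissociation as safe (in the spirit of the oblivious-bounds technique it cites), which is convenient if one wants to dissociate only a subset of the supply nodes. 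Both arguments rest on the same identity $(1-\tilde p(u))^{n_d(u)}=1-p(u)$ and yield the same bound $\Pr(P\text{ survives})\leq\prod_{v_i\in P}(1-\tilde p(v_i))$.
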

\begin{proof}
  Let $U^P = \cup_{1 \leq i \leq m} U^i$ denote the set of supply nodes for the nodes in path $P$. Let $u_j^i \in U^P$ denote one supply node. Let $n_d(u_j^i)$ denote the total number of demand nodes that $u_j^i$ supports. Let $P_j^i$ denote the set of nodes in $P$ that are supported by $u_j^i$ and $|P_j^i| = n_d(u_j^i, P) \leq n_d(u_j^i)$. We follow a similar method in \cite{gatterbauer2014oblivious} to prove the claim.

  Given the realizations of $U^P \setminus u_j^i = U_S \cup U_F$, where $U_S$ denote the survived nodes and $U_F$ denote the failed nodes, there are three possibilities. First, each node in $P$ has at least one survived supply node in $U_S$, and $P$ survives regardless of the state of $u_j^i$. Second, there exists at least one node in $P$ whose supply nodes are all in $U_F$. Thus, $P$ fails regardless of the state of $u_j^i$. Third, $P$ survives if and only if $u_j^i$ survives.

  The last case occurs if for some nodes in $P_j^i$, all the other supply nodes have failed except $u_j^i$. The probability that $P$ survives is given by $1 - p(u_j^i)$.
  By replacing $u_j^i$ with $n_d(u_j^i, P)$ \emph{distinct} nodes, each of which supports a node in $P_j^i$ and fails independently with probability $1 - (1 - p(u_j^i))^{1/n_d(u_j^i)}$, the probability that all the $n_d(u_j^i, P)$ nodes survive is $(1 - p(u_j^i))^{n_d(u_j^i, P)/n_d(u_j^i)} \geq 1 - p(u_j^i)$. In this case, each node in $P_j^i$ has at least one survived supply node, and the path $P$ survives.

  Thus, by the law of total probability, the probability that $P$ survives never decreases after the above replacement. 

  After repeatedly replacing each supply node that supports multiple demand nodes by distinct nodes, each of which supports a single demand node and fails independently with the specified probability, the demand node failures become independent. Let $\tilde p(v_i)$ denote the failure probability of a demand node $v_i$ after the replacement of supply nodes. The failure probability of a path can be computed efficiently as $1-\prod_{v_i \in P}(1-\tilde p(v_i))$, and is a lower bound on the failure probability of the same path in the original problem.
\end{proof}

Let $n_d$ denote the maximum number of demand nodes that a supply node supports, and let $n_s$ denote the maximum number of supply nodes for a demand node. The following lemma bounds the ratio between the upper and lower bounds. Its proof can be found in the Appendix.
\begin{lem} \label{th:bound}
  For any path, the ratio of the upper bound on its failure probability obtained in Lemma \ref{th:upper} to the lower bound obtained in Lemma \ref{th:lower} is at most $(n_d)^{n_s}$.
\end{lem}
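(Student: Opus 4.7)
The plan is to reduce the bound to two pieces: a per-node ratio bound and an aggregation step along the path. Write $U_P = 1-\prod_{v_i\in P}(1-p(v_i))$ and $L_P = 1-\prod_{v_i\in P}(1-\tilde p(v_i))$, and let $c = (n_d)^{n_s}$. I will first show that $p(v_i) \le c\,\tilde p(v_i)$ for every $v_i$ and that $\tilde p(v_i)\le p(v_i)$, and then deduce the path-level inequality $U_P\le c\, L_P$ from these two node-level facts.

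For the per-node ratio, recall $p(v_i)=\prod_{u_j^i\in U^i}p(u_j^i)$ and $\tilde p(v_i)=\prod_{u_j^i\in U^i}\tilde p(u_j^i)$ with $\tilde p(u_j^i)=1-(1-p(u_j^i))^{1/n_d(u_j^i)}$. Bernoulli's inequality applied with exponent $1/n_d(u_j^i)\in(0,1]$ yields $(1-x)^{1/n}\le 1-x/n$, hence $1-(1-x)^{1/n}\ge x/n$. Setting $x=p(u_j^i)$ and $n=n_d(u_j^i)$ gives $p(u_j^i)/\tilde p(u_j^i)\le n_d(u_j^i)\le n_d$, and also $\tilde p(u_j^i)\le p(u_j^i)$ (since $(1-x)^{1/n}\ge 1-x$ for $n\ge 1$). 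Taking the product over the at most $n_s$ supply nodes of $v_i$ yields $p(v_i)/\tilde p(v_i)\le (n_d)^{n_s}=c$ and $\tilde p(v_i)\le p(v_i)$, as claimed.

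For the aggregation, I will prove the following elementary lemma by induction on the length of the path: if $0\le b_i\le a_i\le 1$ and $a_i\le c b_i$ for $i=1,\dots,m$, with $c\ge 1$, then $1-\prod_{i=1}^m(1-a_i)\le c\bigl(1-\prod_{i=1}^m(1-b_i)\bigr)$. The base case $m=1$ is $a_1\le c b_1$. For the induction step, use the recursion $1-\prod_{i=1}^m(1-a_i)=a_m+(1-a_m)\bigl(1-\prod_{i<m}(1-a_i)\bigr)$, apply $a_m\le c b_m$ and the induction hypothesis to $1-\prod_{i<m}(1-a_i)$, and finally use $1-a_m\le 1-b_m$ to bound $(1-a_m)\cdot c\bigl(1-\prod_{i<m}(1-b_i)\bigr)$ by $c(1-b_m)\bigl(1-\prod_{i<m}(1-b_i)\bigr)$. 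Applying this lemma with $a_i=p(v_i)$ and $b_i=\tilde p(v_i)$ gives $U_P\le c L_P$.

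The only nontrivial obstacle is the aggregation step: per-node ratios do not combine additively into the failure probabilities $U_P,L_P$, which are $1-\prod(1-\cdot)$ expressions, so a direct union-bound argument is too loose (it would lose a factor depending on path length). The short induction above circumvents this by using the crucial additional fact $\tilde p(v_i)\le p(v_i)$, which ensures the coefficient $(1-a_m)$ arising in the recursion is dominated by $(1-b_m)$ and lets the multiplicative constant $c$ propagate unchanged along the path.
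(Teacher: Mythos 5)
Your proof is correct, but it takes a genuinely different route from the paper's. The paper reduces the claim to the per-node \emph{survival-probability} inequality $1-p(v_i)\ge(1-\tilde p(v_i))^{n_d^{n_s}}$ (its Lemma \ref{th:inequality0}, which itself rests on a two-variable inequality proved via the mean value theorem in Lemma \ref{th:inequality}), and then aggregates along the path with the single elementary fact $1-x^{c}\le c(1-x)$ applied to $x=\prod_{v_i\in P}(1-\tilde p(v_i))$. You instead reduce to a per-node \emph{failure-probability} ratio bound $p(v_i)\le n_d^{n_s}\,\tilde p(v_i)$, obtained in one line from Bernoulli's inequality $1-(1-x)^{1/n}\ge x/n$ applied factor by factor, together with the monotonicity $\tilde p(v_i)\le p(v_i)$; the aggregation is then an induction on path length whose step crucially uses that monotonicity to keep the constant $c$ from growing. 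The two decompositions trade strength for simplicity in opposite directions: the paper's node-level lemma is the stronger statement (it implies your ratio bound via the same $1-x^c\le c(1-x)$ inequality, but not conversely), which is why its aggregation needs no extra monotonicity input, whereas your node-level bound is weaker but essentially free, and you pay for it with the (also easy) extra fact $\tilde p\le p$ in the induction. Your argument is the more elementary of the two overall --- it would let one dispense with Lemmas \ref{th:inequality} and \ref{th:inequality0} entirely --- at the cost of an induction in place of a one-shot convexity bound.
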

\begin{proof}
  We aim to prove
  $$ \frac{1 - \prod_{v_i \in P}(1 - p(v_i))}{1 - \prod_{v_i \in P}(1 - \tilde p(v_i))} \leq n_d^{n_s},$$
  given
  \begin{equation}\label{eq:failure}
    1 - p(v_i) \geq (1 - \tilde p(v_i))^{n_d^{n_s}},
  \end{equation}
  which will be proved in Lemma \ref{th:inequality0}.

  Given $p(v_i),\tilde p(v_i) \in (0,1)$, with Eq. (\ref{eq:failure}),
  \begin{equation}\label{eq:intermed}
    1 - \prod_{v_i \in P}(1 - p(v_i)) \leq 1 - \prod_{v_i \in P} (1 - \tilde p(v_i))^{n_d^{n_s}}.
  \end{equation}
  and
  $$ 0 < \prod_{v_i \in P} (1 - \tilde p(v_i)) < 1.$$

  Moreover, let
  $$ f(x) = 1 - x^{n_d^{n_s}} - n_d^{n_s}(1 - x).$$
  Since $f(1) = 0$ and
  $$ f'(x) = n_d^{n_s}(1 - x^{n_d^{n_s} - 1}) \geq 0,$$
  for $0 < x \leq 1$ and $n_d^{n_s} \geq 1$, $f(x)$ is an increasing function, and
  $$ f(x) \leq 0,$$
  for $0 < x \leq 1$ and $n_d^{n_s} \geq 1$.

  Let $x =  \prod_{v_i \in P} (1 - \tilde p(v_i)) \in (0,1)$, since $f(x) \leq 0$,
  we obtain that
  $$ 1 - \prod_{v_i \in P} (1 - \tilde p(v_i))^{n_d^{n_s}} \leq {n_d^{n_s}}(1 - \prod_{v_i \in P} (1 - \tilde p(v_i))).$$
  With Eq. (\ref{eq:intermed}),
  $$ 1 - \prod_{v_i \in P}(1 - p(v_i)) \leq {n_d^{n_s}}(1 - \prod_{v_i \in P} (1 - \tilde p(v_i))).$$
  The claim is proved.
\end{proof}

\section{Finding the most reliable path}
\label{sc:search}
In this section, we aim to compute the most reliable path between a source-destination pair $s,t \in V$ in $G(V,E,\ssl_V)$. We first prove that it is NP-hard to approximately compute the most reliable path. We then develop an algorithm to compute the most reliable path when the supply nodes fail independently with an identically small probability, and finally develop an approximation algorithm under arbitrary failure probabilities.

\emph{Hardness of approximation:} Although the failure probability of any given path can be approximated to within factor $1 \pm \epsilon$ for any $\epsilon > 0$, it is NP-hard to obtain an \st path whose failure probability is less than $1 + \epsilon$ times the optimal for a small $\epsilon$. The proof can be found in the Appendix.
\begin{restatable}{thm}{thmNPPath}
  Computing an \st path whose failure probability is less than $1 + \epsilon$ times the failure probability of the most reliable \st path is NP-hard for $\epsilon < 1/m$, where $m$ is the maximum path length.
\end{restatable}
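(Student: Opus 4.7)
The plan is to reduce from the classical minimum shared-risk (equivalently, minimum-label) path problem: given a graph $G$ in which each node carries a single label from $\{c_1,\dots,c_K\}$ and a pair $s,t$, find an $s$-$t$ path that traverses the fewest distinct labels. This problem is NP-hard via a direct reduction from Set Cover, obtained by a layered graph in which the $i$-th layer contains one node (labeled $S_j$) for each set $S_j$ that contains the element $u_i$, with every two consecutive layers fully bipartitely connected and with $s,t$ attached at the ends.

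Given such an instance, I would construct an interdependent-network instance by keeping the same graph as the demand network, creating one supply node per label $c_j$, and letting the supply set of a demand node be the singleton consisting of its label; each supply node fails independently with a common probability $p$, to be specified later. Because each demand node has a single supply node, any $s$-$t$ path $P$ with $k(P)$ distinct labels fails precisely when at least one of its $k(P)$ supply nodes fails, so
\[
\Pr(P \text{ fails}) \;=\; 1-(1-p)^{k(P)},
\]
a strictly increasing function of $k(P)$. Hence the most reliable $s$-$t$ path in the constructed instance coincides with the minimum-label $s$-$t$ path, and the optimal objective value is $1-(1-p)^{k^*}$ where $k^*$ denotes the minimum number of labels.

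I would then quantify the gap. For any suboptimal path using $k\ge k^*+1$ labels,
\[
\lim_{p\to 0}\frac{1-(1-p)^{k}}{1-(1-p)^{k^*}} \;=\; \frac{k}{k^*} \;\ge\; 1+\frac{1}{k^*} \;\ge\; 1+\frac{1}{m},
\]
since a path of length $m$ can carry at most $m$ distinct labels. A short Taylor-expansion calculation then shows that for any fixed $\epsilon<1/m$ one can pick $p$ sufficiently small (and polynomially representable in the input and in $1/m-\epsilon$) so that the finite-$p$ ratio strictly exceeds $1+\epsilon$. Consequently, any algorithm that returns an $s$-$t$ path with failure probability less than $(1+\epsilon)$ times the optimum must return a path with exactly $k^*$ distinct labels, which would solve the NP-hard minimum-label path problem in polynomial time.

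The main obstacle is the finite-$p$ analysis in the last step: one must verify the strict inequality $1-(1-p)^{k^*+1} > (1+\epsilon)\bigl(1-(1-p)^{k^*}\bigr)$ not merely in the limit $p\to 0$ but for a specific $p$ that is both a polynomial-size rational number and compatible with the input encoding, so that the reduction genuinely runs in polynomial time. This amounts to making the lower-order Taylor terms in $1-(1-p)^k = kp - \binom{k}{2}p^2 + \cdots$ small enough relative to the separation $1/m-\epsilon$, which can be done by taking $p$ to be a suitable inverse polynomial in $m$ and $1/m-\epsilon$.
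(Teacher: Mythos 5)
Your proposal is correct and follows essentially the same route as the paper: both reduce from the NP-hard minimum color (minimum-label) path problem by giving every demand node a single supply node with a small common failure probability $p$, so that the failure probability $1-(1-p)^{k}$ is monotone in the number $k$ of distinct supply nodes, and both observe that for $\epsilon<1/m$ a $(1+\epsilon)$-approximate path must use exactly the minimum number of supply nodes. Your treatment is if anything slightly more careful than the paper's, which simply takes $p=o(1/m^2)$ and argues via $(1+\epsilon)p_{\min}<(\bar m_{\min}+1)p+o(p)$ without addressing the polynomial representability of $p$.
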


\subsection{Small and identical failure probability}
If every supply node fails independently with an identically small probability, there are two reliability indicators: $n_s^{\min}$ and $\bar m$. Recall that $n_s^{\min}$ is the minimum number of supply nodes for a node in the path, and that $\bar m$ is the number of combinations of $n_s^{\min}$ supply node failures that disconnect the path.
With the two indicators, the path failure probability can be approximated to within a multiplicative factor $1 \pm \epsilon$ by $\bar m p^{n_s^{\min}}$, under the condition that $p \leq \epsilon/m$. Moreover, the indicator $n_s^{\min}$ is more important (and has a higher priority to be optimized) than $\bar m$. We next develop algorithms to optimize the two indicators.

Given a graph $G(V,E,\ssl_V)$ and a pair of nodes $(s,t)$, the problem of computing an \st path with the maximum $n_s^{\min}$ can be formulated as the \emph{maximum capacity path} problem, where the capacity of a node equals the number of its distinct supply nodes and the capacity of a path is the minimum node capacity along the path. The maximum capacity path can be obtained by a modified Dijkstra's algorithm, and can be obtained in linear time \cite{punnen1991linear}.

However, it is NP-hard to minimize $\bar m$, 
even in the special case where every demand node has a single supply node. The result follows from the NP-hardness of computing a path with the minimum colors in a colored graph \cite{yuan2005minimum}.

We develop an integer program to compute the path $P$ with the minimum $\bar m$, under the condition that $n_s^{\min}(P) = \min_{v_i \in P} n_s(v_i)$ is maximized. The following pre-processing reduces the size of the integer program. First, compute $k = \max_{P \in \mathcal{P}} n_s^{\min}(P)$, where $\mathcal{P}$ is the set of all the \st paths, using the linear-time maximum capacity path algorithm. Then, remove all the nodes that have fewer than $k$ distinct supply nodes and their attached edges, and denote the remaining graph by $G'(V',E',\ssl_{V'})$. The removed nodes and edges will not be used by the optimal path. Let $V'' \subseteq V'$ denote the nodes among which each has exactly $k$ distinct supply nodes. We aim to find a path $V_P$ where the number of distinct supply node sets for $V_P \cap V''$ is minimized.

Let $S_i$ denote the set of supply nodes of $i \in V''$. Let $\ssl_{V''}$ denote the union of these sets. Let $x_{ij}$ denote the flow variable which takes a positive value if and only if edge $(i,j)$ belongs to the selected path. An \st path is identified by constraint (\ref{prog:flow}). A node $i$ is on the selected path if at least one of $x_{ij}$ and $x_{ji}$ is positive. Let $h(S_i)$ denote whether removing supply nodes $S_i$ disconnects the selected path. If a node $i$ is on the selected path and has $k$ supply nodes, then $h(S_i)$ must be one, guaranteed by constraint (\ref{prog:supply}). All the other nodes either do not belong to the selected path or have more than $k$ supply nodes, and their supply node failures are not considered. The objective minimizes $\bar m$, which is the number of combinations of $k$ supply node failures that disconnect the path.

\begin{eqnarray}
\text{min} \hspace{-6mm} && \sum_{S_i \in \ssl_{V''}}  h(S_i) \label{prog:comb}\\
\text{s.t.} \hspace{-6mm} &&  \sum_{\{j|(i,j) \in E'\}} \hspace{-3mm} x_{ij} - \hspace{-4mm} \sum_{\{j|(j,i) \in E'\}} \hspace{-3mm} x_{ji} = \left\{
\begin{array}{l l l}
1, ~\text{if} ~i=s,\\
-1, ~\text{if} ~i=t,\\
0, ~\text{otherwise.}
\end{array}
\right.  \label{prog:flow} \\
&& \sum_{\{j|(i,j) \in E'\}} \hspace{-3mm} x_{ij} + \hspace{-4mm} \sum_{\{j|(j,i) \in E'\}} \hspace{-3mm} x_{ji} \leq 2 h(S_i), ~ \forall i \in V'' \setminus s,t, \label{prog:supply} \\
&& x_{ij} \geq 0, ~~~  \forall (i,j)\in E', \nonumber \\
&& h(S_i) = \{0,1\}, ~~~  \forall S_i \in \ssl_{V''}. \nonumber
\end{eqnarray}

\subsection{Arbitrary failure probability}
If nodes $\tilde V$ in a graph $\tilde G( \tilde V, \tilde E)$ fail independently, the probability that a path survives is the product of the survival probabilities of nodes along the path. The most reliable path can be obtained by the classical shortest path algorithm, by replacing the length of traversing a node $\tilde v_i$ by $- \ln(1-p(\tilde v_i))$, where $p(\tilde v_i)$ is the failure probability of $\tilde v_i$. It is easy to see that the length of a path $P$ is $\sum_{\tilde v_i \in P} - \ln(1 - p(\tilde v_i)) =  - \ln \prod_{\tilde v_i \in P}(1 - p(\tilde v_i))$. The shortest path has the smallest failure probability $1 - \prod_{\tilde v_i \in P}(1 - p(\tilde v_i))$.

Compared with the above simple model, the difficulty in obtaining the most reliable \st path in interdependent networks is the failure correlations of nodes $V \subseteq G(V,E,\ssl_V)$. The failure probability of a path can no longer be characterized by $1 - \prod_{v_i \in P}(1 - p(v_i))$. Moreover, let $s - \dots - v_i - \dots - t$ be the most reliable \st path. The sub-path $s - \dots - v_i$ may not be the most reliable path between $s$ and $v_i$. Thus, the label-correction approach in dynamic programming (\eg, Dijkstra's algorithm) cannot be used, even though the failure probability of a given path can be approximated.

Given the bounds obtained in the previous section, we propose Algorithm \ref{alg:minfailure} to compute a path whose failure probability is within $(n_d)^{n_s}$ times the optimal failure probability. Recall that the bounds on path survival probability are the product of (original or new) node survival probabilities, which exactly match the path survival probability in the case of independent node failures.
\begin{algorithm}[h]
\caption{An approximation algorithm to compute a reliable \st path in $G(V,E,\ssl_V)$.}
\begin{enumerate}
\item For each $v_i \in V$, compute $\tilde p(v_i)$ as follows. Let $u_j^i$ be a supply node of $v_i$ with failure probability $p(u_j^i)$. If $u_j^i$ supports $n_d(u_j^i)$ nodes, let $\tilde p(u_j^i) = 1 - (1 - p(u_j^i))^{1/n_d(u_j^i)}$. Let $\tilde p(v_i)$ be the failure probability of $v_i$ if $u_j^i$ fails independently with probability $\tilde p(u_j^i)$.
\item Compute the most reliable \st path assuming that $v_i$ fails independently with probability $\tilde p(v_i)$. The most reliable path can be obtained by a standard shortest path algorithm (\eg, Dijkstra's algorithm), by letting $-\ln(1 - \tilde p(v_i))$ be the length of traversing node $v_i$.
\end{enumerate}
\label{alg:minfailure}
\end{algorithm}
\begin{thm}
  The failure probability of the path obtained by Algorithm \ref{alg:minfailure} is at most $(n_d)^{n_s}$ times the failure probability of the most reliable \st path under arbitrary supply node failure probabilities.
\end{thm}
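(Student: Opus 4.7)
The plan is to sandwich the true failure probability of any path between the two quantities appearing in Lemmas~\ref{th:upper} and~\ref{th:lower}, and then exploit the fact that Algorithm~\ref{alg:minfailure} exactly optimizes the lower-bound quantity. Let $P^\star$ denote the path returned by the algorithm and let $P^{\mathrm{opt}}$ denote the true most reliable \st path under the correlated supply-node failures. Write $\Pr(F_P)$ for the true failure probability of path $P$, $U(P) = 1-\prod_{v_i\in P}(1-p(v_i))$ for the Lemma~\ref{th:upper} upper bound, and $L(P) = 1-\prod_{v_i\in P}(1-\tilde p(v_i))$ for the Lemma~\ref{th:lower} lower bound. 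By construction, Step~2 of Algorithm~\ref{alg:minfailure} runs a shortest-path computation with edge weights $-\ln(1-\tilde p(v_i))$, so $P^\star$ minimizes $L(\cdot)$ over all \st paths; in particular $L(P^\star)\le L(P^{\mathrm{opt}})$.

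I would then chain four inequalities. First, Lemma~\ref{th:upper} gives $\Pr(F_{P^\star})\le U(P^\star)$. Second, Lemma~\ref{th:bound} gives $U(P^\star)\le (n_d)^{n_s} L(P^\star)$. Third, the optimality of $P^\star$ under the independent-failure surrogate gives $L(P^\star)\le L(P^{\mathrm{opt}})$. Fourth, Lemma~\ref{th:lower} gives $L(P^{\mathrm{opt}})\le \Pr(F_{P^{\mathrm{opt}}})$. Composing these yields
\begin{equation*}
\Pr(F_{P^\star})\ \le\ U(P^\star)\ \le\ (n_d)^{n_s}\,L(P^\star)\ \le\ (n_d)^{n_s}\,L(P^{\mathrm{opt}})\ \le\ (n_d)^{n_s}\,\Pr(F_{P^{\mathrm{opt}}}),
\end{equation*}
which is exactly the claimed guarantee.

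There is essentially no computational content left after the three supporting lemmas; the only subtlety to double-check is that Step~2 of the algorithm truly computes $\arg\min_P L(P)$. This follows because $L(P) = 1-\prod_{v_i\in P}(1-\tilde p(v_i))$ is monotone in $\prod_{v_i\in P}(1-\tilde p(v_i))$, and maximizing that product is equivalent to minimizing $\sum_{v_i\in P} -\ln(1-\tilde p(v_i))$, which is a nonnegative additive path cost and is thus solved exactly by Dijkstra's algorithm. The main obstacle, if any, is purely notational: the $\tilde p(v_i)$ are defined relative to an independent-failure model in which each supply node is split into $n_d(u_j^i)$ copies, and one must be careful that Lemma~\ref{th:bound} applies to $P^\star$ with the same $n_d$ and $n_s$ parameters that appear in the theorem's bound. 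Once that bookkeeping is clear, the theorem follows immediately from the four-step chain above.
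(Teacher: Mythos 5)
Your proof is correct and follows essentially the same route as the paper: the paper's one-line chain $p(P') \leq n_d^{n_s}\tilde p(P') \leq n_d^{n_s}\tilde p(P^*) \leq n_d^{n_s} p(P^*)$ is exactly your four inequalities (with the first two of yours collapsed into its first step via Lemmas~\ref{th:upper} and~\ref{th:bound}). Your explicit check that Dijkstra on the $-\ln(1-\tilde p(v_i))$ weights minimizes $L(\cdot)$ is a useful detail the paper leaves implicit, but the argument is the same.
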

\begin{proof}
  Let the path obtained by Algorithm \ref{alg:minfailure} be $P'$ and let the path with the minimum failure probability be $P^*$. Let $p(P')$ and $p(P^*)$ denote their failure probabilities. Moreover, let $\tilde p(P')$ and $\tilde p(P^*)$ denote their failure probabilities by assuming that each node $v_i$ fails independently with probability $\tilde p(v_i)$. We have $p(P') \leq n_d^{n_s} \tilde p(P') \leq n_d^{n_s} \tilde p(P^*) \leq n_d^{n_s} p(P^*)$, where the first inequality follows from Lemma \ref{th:bound} and the last inequality follows from Lemma \ref{th:lower}.
\end{proof}

\begin{rmk}
  If $n_s = 1$ and every supply node fails independently with an identically small probability, our result reduces to the following result in the classical shared risk group model: The number of risks associated with the shortest path is at most $n_d$ times the number of risks associated with the minimum-risk path \cite{lee2010diverse}.
\end{rmk}

\section{Reliability of a pair of paths}
\label{sc:pair}
To study diverse routing in interdependent networks, we consider the simplest case of two \st paths in this section. Given that computing the failure probability of a single path is $\#P$ hard if every node has more than one supply node, it is also $\#P$ hard to compute the failure probability of two paths\footnote{Meanwhile, it is still simple to compute the failure probability of two paths if every node has a single supply node, by first computing the probability that the first path fail, and then computing the probability that the second path fail while the first path does not fail (\ie, none of the supply nodes of the first path fail), both in polynomial time, and summing the two probabilities.}. To see this, note that if two paths have the same number of nodes and each node in the first path has identical supply nodes as its corresponding node in the second path, then the probability that both paths fail equals the probability that a single path fails. Fortunately, we are still able to obtain $1 \pm \epsilon$-approximation of the failure probability in polynomial time.

\subsection{Small and identical failure probability}
A central concept in diverse routing is the \emph{disjoint paths} or \emph{risk disjoint paths} \cite{hu2003diverse, yuan2005minimum, coudert2007shared}. In the classical shared risk group model, if every risk occurs independently with an identically small probability $p = o(1/m^2)$, the probability that two paths fail is $\Theta(f(m) p^2)$ if they are risk disjoint and $\Theta(f(m) p)$ if they share one or more risks, where $m$ is the maximum path length and $f(m)$ is a function of $m$. Thus, risk-disjointness characterizes the order of the reliability of two paths. In interdependent networks where every demand node has multiple supply nodes, if nodes in $P^1$ do not share any supply nodes with nodes in $P^2$, then $P^1$ and $P^2$ are risk disjoint. However, risk-disjointness does not suffice to characterize the reliability of two paths, for the following two reasons. First, the failure probability of a demand node depends on the number of supply nodes for it, which is not related to risk-disjointness. Second, if $P^1$ and $P^2$ share some supply nodes, the failure probability depends further on the maximum number of supply node failures that the two paths can withstand. To study the reliability of two paths in interdependent networks, we define \emph{$d$-failure resilient paths} as follows.

\begin{defn}
Two paths are $d$-failure resilient if removing any $d$ supply nodes would not disconnect both paths.
\end{defn}

\begin{rmk}
In the classical graph model $\tilde G(\tilde V,\tilde E)$, two disjoint paths are 1-failure resilient while two overlapping paths are 0-failure resilient. In the classical shared risk group model, two risk disjoint paths are 1-failure resilient while two paths that share risks are 0-failure resilient. Two paths can never be more than one failure resilient. Thus, the disjointness or risk-disjointness suffices to characterize (the order of) the reliability of two paths in these models.
\end{rmk}

\subsubsection{Evaluation of failure probability}
Consider two paths $P^1 = s - v_1^1 - v_2^1 - \dots - v_{m_1}^1 - t$, $P^2 = s - v_1^2 - v_2^2 - \dots - v_{m_2}^2 - t$ between a pair of nodes $(s,t)$. We study the event that at least one node in $P^1$ and at least one node in $P^2$ both fail. Let $F_i^k$ denote the event that all the supply nodes of $v_i^k$ fail, and let $F^k$ denote the event that the $k$-th path fails, $k \in \{1,2\}$. Then $F^1 \cap F^2 = \cup_{1 \leq i \leq m_1, 1 \leq j \leq m_2} (F_i^1 \cap F_j^2)$. For simplicity of presentation, let $F^{\text{both}}= F^1 \cap F^2$ and $ F_{ij} = F_i^1 \cap F_j^2$. Let $S_{ij}$ denote the union of supply nodes of $v_i^1$ and $v_j^2$.

To decide whether two paths are $d$-failure resilient, we consider the number of supply node failures that lead to the event $F_{ij}$, and denote the number by $d_{ij}$. Then $d = \min_{1 \leq i \leq m_1, 1 \leq j \leq m_2} d_{ij} - 1$. Moreover, let $\bar m$ be the number of pairs of nodes, one from each path, such that each pair of nodes in total have $d + 1$ distinct supply nodes and any two pairs do not have the same set of $d + 1$ supply nodes. (\Ie, $\bar m$ combinations of $d + 1$ supply node failures each disconnect both paths.) The next theorem formalizes the connection between the reliability of two paths and $d$.

\begin{thm} \label{th:two}
If every supply node fails independently with probability $p \leq \epsilon /(m_1 m_2)$, then the probability that two $d$-failure resilient paths with lengths $m_1, m_2$ both fail satisfies $(1 - \epsilon) \bar m p^{d+1} \leq \Pr(F^{\text{both}}) \leq (1 + \epsilon) \bar m p^{d+1}$.
\end{thm}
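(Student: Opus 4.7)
The plan is to mirror the inclusion–exclusion argument used for Theorem \ref{th:singleGeneral}, but now indexed over pairs of nodes (one per path) rather than single nodes. Writing $F^{\text{both}}=\bigcup_{1\le i\le m_1,\,1\le j\le m_2}F_{ij}$ and truncating Bonferroni's inequalities after the first and second order terms should yield matching upper and lower bounds of the form $\bar m\,p^{d+1}(1\pm\epsilon)$.

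First I would perform the same redundancy reduction as in Theorem \ref{th:singleGeneral}: if $S_{i'j'}\subseteq S_{ij}$ then $F_{ij}\Rightarrow F_{i'j'}$, so the pair $(i,j)$ can be dropped from the union without altering $F^{\text{both}}$. After this sequential removal, no surviving supply set contains another, hence for any two distinct surviving pairs $|S_{ij}\cup S_{i'j'}|\ge \max(|S_{ij}|,|S_{i'j'}|)+1\ge d+2$. Partition the survivors into $D_1$, the pairs with $|S_{ij}|=d+1$ (by definition $|D_1|=\bar m$), and $D_2$, the remaining pairs with $|S_{ij}|\ge d+2$.

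For the upper bound the union bound gives $\Pr(F^{\text{both}})\le \bar m\,p^{d+1}+|D_2|\,p^{d+2}\le \bar m\,p^{d+1}+m_1m_2\,p^{d+2}$, and invoking $p\le \epsilon/(m_1m_2)$ bounds the trailing term by $\epsilon\,p^{d+1}\le \epsilon\bar m\,p^{d+1}$ (assuming $\bar m\ge 1$, else the theorem is vacuous), giving $(1+\epsilon)\bar m\,p^{d+1}$. For the lower bound I would apply $\Pr(F^{\text{both}})\ge \sum \Pr(F_{ij})-\sum_{(i,j)<(i',j')}\Pr(F_{ij}\cap F_{i'j'})$; the first sum is at least $\bar m\,p^{d+1}$, and by the union-size bound above each joint probability is at most $p^{d+2}$ (with an extra factor of $p$ whenever either pair lies in $D_2$). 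The second sum is therefore bounded by $\binom{\bar m}{2}p^{d+2}+\binom{m_1 m_2}{2}p^{d+3}$, which after using $p\le \epsilon/(m_1m_2)$ and $\bar m\le m_1m_2$ is at most $\epsilon\bar m\,p^{d+1}$, yielding the $(1-\epsilon)\bar m\,p^{d+1}$ lower bound.

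The main bookkeeping subtlety is checking that the sequential redundancy removal is consistent with the theorem's definition of $\bar m$: two pairs with identical supply sets are mutually redundant, so exactly one survivor is retained per equivalence class of $(d+1)$-element supply sets, which matches the ``any two pairs do not have the same set of $d+1$ supply nodes'' clause. Beyond that the computation is essentially the one in Theorem \ref{th:singleGeneral} with $m$ replaced by $m_1 m_2$ and $n_s^{\min}$ replaced by $d+1$.
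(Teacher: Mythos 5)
Your proposal is correct and follows essentially the same route as the paper's proof: the same redundancy reduction on the supply sets $S_{ij}$, the same partition into $(d+1)$-sized and larger sets, the union bound for the upper estimate, and the two-term Bonferroni truncation with the identical $\binom{\bar m}{2}p^{d+2}+\binom{m_1 m_2}{2}p^{d+3}$ bound for the lower estimate. The one remark you add (that $\bar m \ge 1$ is needed to absorb $\epsilon p^{d+1}$ into $\epsilon \bar m p^{d+1}$) is harmless, since the definition of $d$ as $\min_{i,j} d_{ij}-1$ guarantees at least one pair attains $d+1$.
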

\begin{proof}
First consider the events $F_{ij} = F_i^1 \cap F_j^2$, $1\leq i \leq m_1, 1 \leq j \leq m_2$. Let $S_{ij}$ denote the union of supply nodes of $v_i^1$ and $v_j^2$. Then the event $F_{ij}$ occurs if and only if all the nodes $S_{ij}$ fail. By a similar argument as the proof of Theorem \ref{th:singleGeneral}, if $S_{i_1 j_1}$ is a subset of $S_{i_2 j_2}$, then $F_{i_2 j_2}$ occurs only if $F_{i_1 j_1}$ occurs, and $F_{i_2 j_2}$ is redundant. In the following we only consider $\mathcal S = \{S_{ij}|1 \leq i \leq m_1, 1 \leq j \leq m_2,  \text{ none of $S_{ij}$ is a subset of another.}\}$. The cardinality of $\mathcal S$ at most $m_1 m_2$.

By the inclusion-exclusion principle, $\Pr(F^{\text{both}})$ can be computed as follows.
\begin{align}
\Pr(F^{\text{both}}) &= \sum_{S_{ij} \in \mathcal S} \Pr(S_{ij} \text{ fail}) \nonumber \\
& - \sum_{S_{i_1 j_1}, S_{i_2 j_2} \in \mathcal S} \Pr(S_{i_1 j_1} \cup S_{i_2 j_2} \text{ fail}) \nonumber \\
& + \dots + (-1)^{|\mathcal{S}| - 1} \Pr(\cup_{S_{ij} \in \mathcal S} S_{ij} \text{ fail}).
\label{eq:inc2}
\end{align}

Since two paths are $d$-failure disjoint, the number of nodes in $S_{i j}$ is $d + 1$ for some $i\in\{1,\dots,m_1\}, j \in \{1,\dots, m_2\}$ while the number of nodes in all the other $S_{ij}$ is larger than $d + 1$. Let $\mathcal{S}_1 \subseteq \mathcal S$ be the union of the supply node sets, each of which contains $d+1$ nodes, and let $\bar m = |\mathcal{S}_1|$. The first term in Eq. (\ref{eq:inc2}) is at most $\bar m q^{d + 1} + (|\mathcal S| - \bar m) q^{d + 2}$. Therefore,
\begin{align*}
  \Pr(F^{\text{both}}) \leq & \bar m q^{d + 1} + m_1 m_2 q q^{d + 1} \\
  \leq & \bar m q^{d + 1} + \epsilon q^{d + 1},
\end{align*}
if $q \leq \epsilon/(m_1 m_2)$.

We next consider the supply node failures of two pairs of nodes. Recall that $\mathcal{S}_1$ consists of supply node sets that each contain $d+1$ nodes. Let $\mathcal{S}_2 = \mathcal{S} \setminus \mathcal{S}_1$ be the remaining supply node sets that each contain $d+2$ or more nodes. The union of two sets $S_{i_1 j_1} \cup S_{i_2 j_2}$ ($S_{i_1 j_1}, S_{i_2 j_2} \in \mathcal S_1$) contains at least $d+2$ nodes. The union $S_{i_1 j_1} \cup S_{i_2 j_2}$ ($S_{i_1 j_1}, S_{i_2 j_2} \in \mathcal S_2$, or $S_{i_1 j_1} \in \mathcal{S}_1, S_{i_2 j_2} \in \mathcal S_2$) contains at least $d+3$ nodes. The absolute value of the second term is at most ${\bar m \choose 2} q^{d+2} + [{m_1m_2  \choose 2} - {\bar m \choose 2}] q^{d+3}$. To conclude,
\begin{align*}
  \Pr(F^{\text{both}}) \geq & \bar m q^{d+1} - \big(\frac{\bar m^2}{2} q q^{d+1} + \frac{(m_1 m_2)^2}{2} q^2 q^{d+1}\big)\\
  \geq & \bar m q^{d+1} - \epsilon \bar m q^{d+1},
\end{align*}
if $q \leq \epsilon/(m_1 m_2)$.

\end{proof}

\subsubsection{Finding the most reliable pair of paths}
From Theorem \ref{th:two}, we know that the probability that two $d$-failure resilient paths both fail is smaller for larger values of $d$. Moreover, for a fixed $d$, the failure probability is proportional to $\bar m$, the number of combinations of $d+1$ supply node failures that disconnect both paths. We have obtained two reliability indicators for two paths: $d$ and $\bar m$.

Unfortunately, computing the pair of \st paths that have the maximum $d$ and the minimum $\bar m$ are both NP-hard, even in the special case where every demand node has a single supply node. This special case reduces to the classical shared risk group model. In this special case, $d = 1$ if there exist two risk-disjoint paths, and $d = 0$ otherwise. The NP-hardness of determining the existence of two risk-disjoint paths between an \st pair has been proved in \cite{hu2003diverse}. 
Moreover, in this special case, for two paths that share common supply nodes, $\bar m$ is the number of overlapping risks between the two paths (\ie, removing any of the $\bar m$ supply nodes disconnects both paths). The NP-hardness of the least coupled paths problem, which computes a pair of paths that share the minimum number of risks in the classical shared risk group model, has also been proved in \cite{hu2003diverse}.

We develop an integer program to compute a pair of \st paths with the maximum $d$ in $G(V,E,\ssl_V)$. Let variable $x_{ij}^k$ denote whether edge $(i,j)$ is part of the $k$-th path, and let variable $b_i^k$ denote whether node $i$ is part of the $k$-th path, $k \in \{1,2\}$. Same as before, let $S_i$ denote the supply nodes of node $i$. Constraints (\ref{prog:disjoint}) guarantee that two paths are node-disjoint. Notice that these constraints can be dropped if there is no restriction on the physical disjointness of two paths. Constraints (\ref{prog:twofailures}) guarantee that at least $d+1$ supply nodes need to be removed in order for one node in each path to fail (\ie, $b_i^1 = b_j^2 = 1$, $i,j \in V$), where $M$ is a sufficiently large number, \eg, twice the maximum number of supply nodes for a demand node.
\begin{eqnarray}
\text{max} && \hspace{-3mm} d \label{prog:level}\\
\text{s.t.}&& \hspace{-5mm} \sum_{\{j|(i,j) \in E\}} \hspace{-3mm} x_{ij}^k - \hspace{-4mm} \sum_{\{j|(j,i) \in E\}} \hspace{-3mm} x_{ji}^k = \left\{
\begin{array}{l l l}
1, ~\text{if} ~i=s,\\
-1, ~\text{if} ~i=t,\\
0, ~\text{otherwise.}
\end{array}
\right. k \in \{1,2\}, \nonumber \\
&& \hspace{-5mm}  \sum_{\{j|(i,j) \in E\}} \hspace{-3mm} x_{ij}^k + \hspace{-4mm} \sum_{\{j|(j,i) \in E\}} \hspace{-3mm} x_{ji}^k \leq 2 b_i^k, ~~ \forall i \in V, k \in \{1,2\} \nonumber \\
&& \hspace{-3mm} b_{i}^1 + b_{i}^2 \leq 1, ~\forall i \in V \setminus s,t, \label{prog:disjoint} \\
&& \hspace{-5mm} d + 1 \leq |S_i \cup S_j| + M(2 - b_i^1 - b_j^2), \forall i,j \in V \setminus s,t, \label{prog:twofailures} \\
&& \hspace{-5mm} x_{ij}^k \in \{0, 1\},~~\hfill \forall (i,j)\in E, k \in \{1,2\}, \nonumber \\
&& \hspace{-5mm} b_i^k \in \{0,1\}, ~~ \hfill \forall i \in V, k \in \{1,2\}. \nonumber
\end{eqnarray}

A slightly modified integer program suffices to minimize $\bar m$ under the condition that $d$ is maximized. Let $h(S_i \cup S_j)$ denote whether removing the union of supply nodes for $i$ and $j$ disconnects both paths. Constraints (\ref{prog:twofailures2}) guarantee that if $i$ and $j$ belong to two different paths, \ie, $b_i^1 = b_j^2 = 1$, then $h(S_i \cup S_j) = 1$. Otherwise, $h(S_i \cup S_j) = 0$ in the optimal solution. Let a positive value $w(|S_i \cup S_j|)$ denote its \emph{weight}, which is a decreasing function of the cardinality $|S_i \cup S_j|$. We aim to minimize the total weights of supply node failures that disconnect two paths. In order to guarantee that $d$ is maximized, $w(l) / w(l+1)$ should be sufficiently large for any integer $l$, \eg, $|V|^2 / 2$. Since there are at most $|V|(|V|-1)/2$ pairs of nodes, larger $d$ is always preferable and has a higher priority to be optimized over $\bar m$.

\begin{eqnarray}
\text{min} && \hspace{-3mm} \sum_{S_i, S_j \in \ssl_V} w(|S_i \cup S_j|) h(S_i \cup S_j) \label{prog:comb2}\\
\text{s.t.}&& \hspace{-5mm} \sum_{\{j|(i,j) \in E\}} \hspace{-3mm} x_{ij}^k - \hspace{-4mm} \sum_{\{j|(j,i) \in E\}} \hspace{-3mm} x_{ji}^k = \left\{
\begin{array}{l l l}
1, ~\text{if} ~i=s,\\
-1, ~\text{if} ~i=t,\\
0, ~\text{otherwise.}
\end{array}
\right. k \in \{1,2\}, \nonumber \\
&& \hspace{-5mm}  \sum_{\{j|(i,j) \in E\}} \hspace{-3mm} x_{ij}^k + \hspace{-4mm} \sum_{\{j|(j,i) \in E\}} \hspace{-3mm} x_{ji}^k \leq 2 b_i^k, ~~ \forall i \in V, k \in \{1,2\} \nonumber \\
&& \hspace{-3mm} b_{i}^1 + b_{i}^2 \leq 1, ~\forall i \in V \setminus s,t, \nonumber \\
&& \hspace{-3mm} h(S_i \cup S_j) \geq b_i^1 + b_j^2 - 1, ~~ \forall i,j \in V \setminus s,t, \label{prog:twofailures2} \\
&& \hspace{-3mm} x_{ij}^k \in \{0, 1\}, ~~\hfill \forall (i,j)\in E, k \in \{1,2\}, \nonumber \\
&& \hspace{-3mm} h(S_i \cup S_j) \in \{0,1\}, ~~ \hfill \forall i,j \in V. \nonumber
\end{eqnarray}

\subsection{Arbitrary failure probability}
\label{sc:twoDNF}
\subsubsection{Evaluation of failure probability}
We use a similar importance sampling approach to Algorithm \ref{al:sampling} and formulate the problem of computing the failure probability of two paths as a DNF probability problem. A clause $C_{ij}$ represents a pair of nodes $v_i^1$ and $v_j^2$. Literals in $C_{ij}$ represent the union of supply nodes of $v_i^1$ and $v_j^2$. A literal is true if and only if the supply node that it represents fails, and the probability that the literal is true is the same as the supply node failure probability. The disjunction of clauses is true if and only if at least one clause is true, in which case both paths fail because at least one node from each path fails.
The rest of the computation follows the same manner as Algorithm \ref{al:sampling}, by replacing a node in Algorithm \ref{al:sampling} by a pair of nodes. An $(\epsilon, \delta)$-approximation of the failure probability $\Pr(F^{\text{both}})$ can be obtained in $O(m_1^2 m_2^2 n_s \ln(1/\delta) / \epsilon^2)$ time. 

\subsubsection{Finding the most reliable pair of paths}
It is more difficult to find two paths that have the smallest failure probability. Recall Theorem \ref{th:two}. The failure probability of two paths is $\Theta(f(m_1, m_2)p^{d+1})$ if they are $d$-failure resilient when the supply node failure probability $p$ is small, where $f(m_1, m_2)$ is a function of two path lengths. As a corollary of the fact that it is NP-hard to compute two paths that have the maximum level of resilience $d$, it is also NP-hard to compute two paths whose failure probability is within a factor $\alpha$ from the optimal, where $\alpha$ is any function of the network size. Thus, we develop the following heuristic. After computing the failure probability $\tilde p(v_i)$ of a node $v_i$ in Step 1 of Algorithm \ref{alg:minfailure}, let $-\ln(1- \tilde p(v_i))$ be the length of traversing node $v_i$, and compute two node disjoint paths with the minimum total lengths. The two paths can be efficiently obtained using a slightly modified shortest augmenting path algorithm \cite{edmonds1972theoretical}. The computation is outlined in Algorithm \ref{alg:pair}. The reason for the graph transformation in Step 1 is to simplify the computation of a residual graph, to which the shortest augmenting path algorithm can be applied.

\begin{algorithm}[h]
\caption{A heuristic to compute a pair of reliable \st path in $G(V,E,\ssl_V)$.}
\begin{enumerate}
\item Transform $G(V,E,\ssl_V)$ with node failure probabilities to a directed graph $G'$ with edge failure probabilities using the standard approach. (Split every node $v$ into $v_\text{in}$ and $v_\text{out}$. Add a directed edge from $v_\text{in}$ to $v_\text{out}$, which has length $-\ln(1- \tilde p(v_i))$. Add a directed edge from $v_{1\text{out}}$ to $v_{2\text{in}}$ and a directed edge from $v_{2\text{out}}$ to $v_{1\text{in}}$, both with zero length, if an edge exists between $v_1$ and $v_2$ in $G(V,E,\ssl_V)$.)
\item Compute the shortest path $P'_1$ from $s_\text{out}$ to $t_\text{in}$ in $G'$.
\item Compute the residual graph. Remove all the edges in $P'_1$. Add a backward edge from $v'_2$ to $v'_1$ with a negated length if an edge from $v'_1$ to $v'_2$ is part of $P'_1$.
\item Compute the shortest path $P'_2$ from $s_\text{out}$ to $t_\text{in}$ in the residual graph.
\item Combine $P'_1$ and $P'_2$ by cycle cancellation. The two paths become node-disjoint and can be mapped to two paths in $G(V,E,\ssl_V)$.
\end{enumerate}
\label{alg:pair}
\end{algorithm}
\color{black}
\section{Numerical results}
\label{sc:numerical}
We study the robust routing problems in the XO backbone communication network with 60 nodes and 75 edges \cite{xo}, by assuming that the XO nodes are supported by 36 randomly generated supply nodes within the continental US. The XO network topology is depicted in Fig. \ref{fig:xo}, and the supply nodes are marked as triangles. The $x$-axis represents the longitude and the $y$-axis represents the latitude. We do not claim that the XO network needs supply from these randomly generated points, and we use this example only to provide a visualization of the robust routing problems using available data. 
\begin{figure}[h]
\begin{centering}
\leavevmode\includegraphics[width=\linewidth]{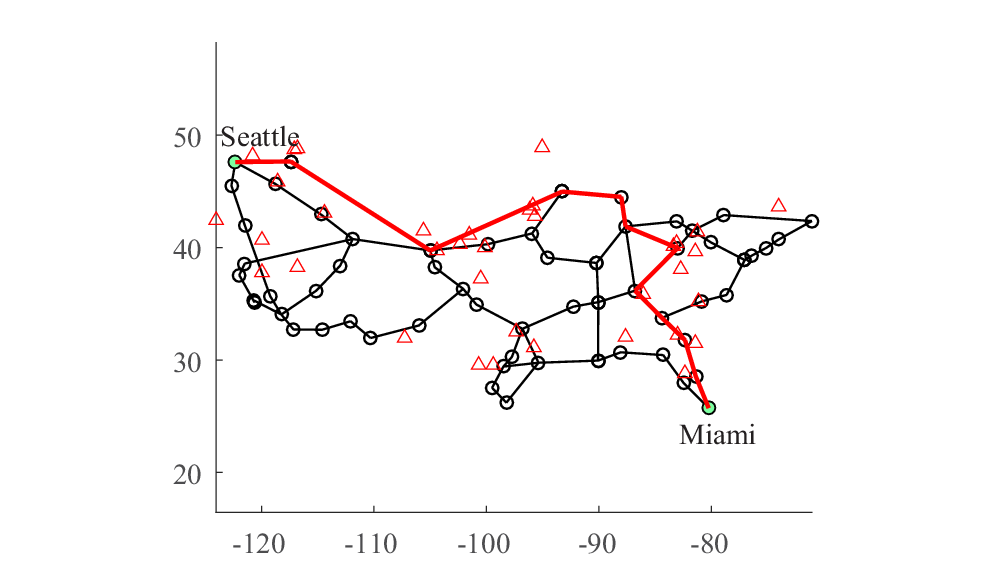}
\caption{Topology of the XO network and randomly generated triangle supply nodes. The most reliable Seattle-Miami path is colored red under the condition that supply node failure probability is small and identical and every XO node depends on two nearest supply nodes.}
\label{fig:xo}
\end{centering}
\end{figure}

First, we assume that every XO node depends on two nearest supply nodes and every supply node fails independently with probability $10^{-2}$. Since the supply node failure probability is small and identical, we are able to obtain the most reliable path and pair of paths by optimizing the reliability indicators using integer programs.

To identify the most reliable path, since $n_s^{\min}(P) = 2$ for any path $P$, we only need to compute a path with the minimum $\bar m$ using the integer program in Section \ref{sc:search}. The most reliable path is colored red in Fig. \ref{fig:xo}, for which $\bar m = 8$. To evaluate the path failure probability, by Corollary \ref{th:single}, setting $\epsilon = 4 \times 10^{-2}$, $\Pr(F) \in [7.68 \times 10^{-4}, 8 \times 10^{-4}]$. To compare, using Algorithm \ref{al:sampling}, we obtain $7.9686 \times 10^{-4}$ as a ($1 \pm 0.01$)-approximation of the path failure probability with probability 0.99. These results suggest that the two reliability indicators ($n_s^{\min}, \bar m$) well characterize the path failure probability when the supply node failure probability is small and identical.

We compute the most reliable pair of paths connecting Seattle-Miami using the integer programs in Section \ref{sc:pair}. The two paths are plotted in Fig. \ref{fig:xotwo}, and they are 1-failure resilient ($d = 1$). The failure probability of both paths is approximately $1.0388 \times 10^{-4}$. In contrast, the most reliable pair of paths connecting Seattle-Denver are 3-failure resilient and their failure probability is approximately $2.9800 \times 10^{-8}$. Thus, the level of resilience well indicates the reliability of two paths.
\begin{figure}[h]
\begin{centering}
\leavevmode\includegraphics[width=\linewidth]{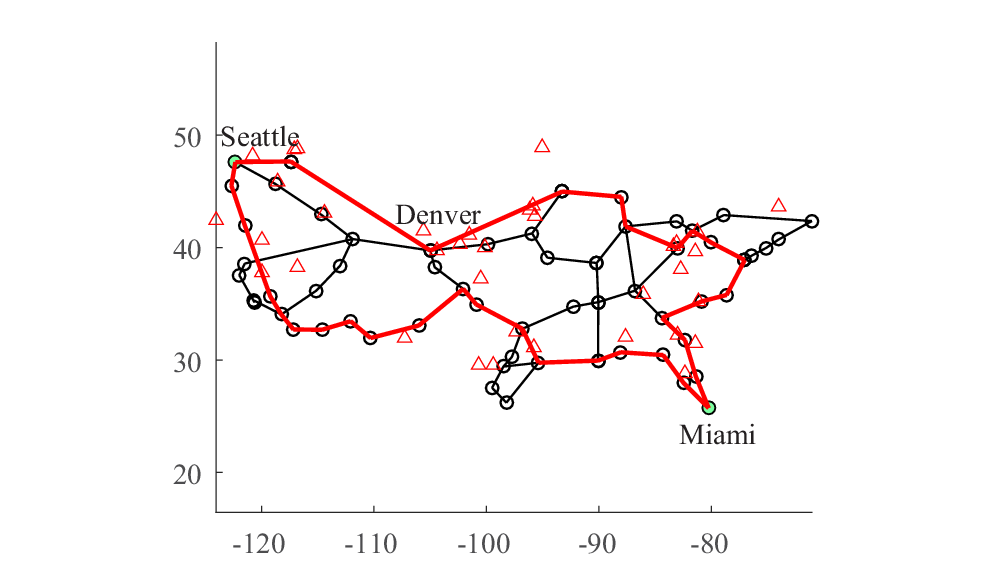}
\caption{The most reliable pair of paths between Seattle-Miami are colored red, under the condition that supply node failure probability is small and identical and every XO node depends on two nearest supply nodes.}
\label{fig:xotwo}
\end{centering}
\end{figure}

Next, we assume that an XO node depends on $N_s$ randomly chosen supply nodes, where $N_s$ is uniformly chosen among 1, 2, and 3. Let the failure probability of each supply node be uniformly and independently chosen from $[0.005, 0.015]$. We use Algorithm \ref{alg:minfailure} to obtain a reliable path connecting Seattle-Miami. Averaged over 10 trials, the path failure probability is approximately $2.1032 \times 10^{-2}$, while the lower bound on the failure probability of the most reliable path is $5.2365 \times 10^{-3}$. The obtained path has failure probability around four times the lower bound. Moreover, by using the heuristic to find a pair of paths, the paths have average failure probability $3.9732 \times 10^{-3}$, which improves the reliability of a single path.

We compare the performance of the heuristic (Algorithm \ref{alg:pair}) with the optimal pair of paths. Since it is difficult to obtain the optimal pair of paths under arbitrary failure probabilities, we use the integer program (\ref{prog:comb2}), under the condition that supply nodes fail independently with probability $10^{-2}$. If every XO node depends on two nearest supply nodes, the failure probabilities of two optimal paths and two paths obtained by the heuristic are approximately $1.0388 \times 10^{-4}$ and $1.0773 \times 10^{-4}$, respectively. If every XO node depends on three nearest supply nodes, the failure probability of two optimal paths and two paths obtained by the heuristic are approximately $1.0200\times10^{-6}$ and $1.0508 \times 10^{-6}$, respectively. These experiments validate the performance of our heuristic algorithm. 

Finally, we report the running times of the algorithms, executed in a workstation that has an Intel Xeon Processor (E5-2687W v3) and 64GB RAM. The integer programs that find the most reliable path and pair of paths (under small and identical supply node failure probability) can both be solved within 1 second. The approximation algorithm to find a reliable path and the heuristic to find a pair of paths (under arbitrary failure probabilities) can both be solved within 0.1 second. The evaluation of the failure probability of one path or a pair of paths by Algorithm \ref{al:sampling} takes several minutes, by setting $\epsilon = \delta = 0.01$. Thus, the algorithms (integer programs and Algorithms \ref{alg:minfailure} and \ref{alg:pair}) can be used to find reliable routes in realistic size networks.
\section{Conclusion}
\label{sc:conclusion}
We studied the robust routing problem in interdependent networks. We developed approximation algorithms to compute the path failure probability, and identified reliability indicators for a path, based on which we develop algorithms to find the most reliable route in interdependent networks. We also studied diverse routing in interdependent networks, and developed approximation algorithms to compute the probability that two paths both fail and to find two reliable paths. Our work extends the shared risk group models, and provides a new framework to study robust routing problems in interdependent networks.

\section*{Appendix}
\subsection{Computational complexity}

In this section, we prove the complexity of computing the path failure probability and finding the most reliable path.
\thmSharpP*

\begin{proof}
The problem of computing the path failure probability can be reduced from a \emph{monotone DNF counting} problem. A monotone DNF counting problem aims to compute the number of satisfying assignments of literals, for a DNF formula that has no negated literals. The monotone DNF counting problem is $\# P$-hard, even if every clause contains two literals~\cite{valiant1979complexity}. (The original paper \cite{valiant1979complexity} considers conjunctive normal form counting, monotone 2-CNF(SAT). It is easy to see the equivalence between the monotone 2-DNF and monotone 2-CNF by applying De Morgan's law and negating all the literals.)

Given a monotone DNF counting problem, construct a path as follows. Each node in the path represents a clause, and its supply nodes represent the literals in the clause. (See Fig. \ref{fig:DNF} for an example.)
\begin{figure}[h]
\begin{centering}
\leavevmode\includegraphics[width=0.8\linewidth]{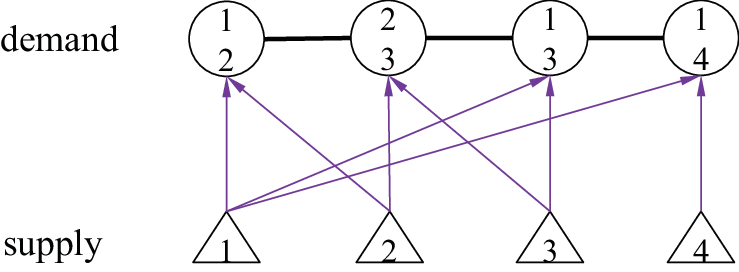}
\caption{A path constructed from a monotone DNF formula $(x_1 \land x_2) \lor (x_2 \land x_3) \lor (x_1 \land x_3) \lor (x_1 \land x_4)$.}
\label{fig:DNF}
\end{centering}
\end{figure}

If every supply node fails independently with probability $1/2$, then the path failure probability is $N / 2^m$, where $m$ is the total number of supply nodes (literals), and $N$ is the number of combinations of supply node failures that lead to the failure of at least one node, which equals the number of satisfying assignments for the DNF formula. Thus, the failure probability of a path under $p = 1/2$ gives an answer to the monotone DNF counting problem. To conclude, computing the path failure probability is $\# P$-hard if every node has two or more supply nodes.
\end{proof}

\begin{rmk}
We further consider the complexity of computing the path reliability, with additional restrictions on the maximum number of demand nodes that any supply node supports. If every supply node supports at most two demand nodes, in addition to the restriction that every demand node has at most two supply nodes, then the failure probability can be computed in polynomial time, when every supply node fails independently with an identical probability. The computation follows from the algorithm in \cite{roth1996hardness}, which relates the number of satisfying assignments of a DNF formula, where each literal appears at most twice and each clause contains two literals, to the number of independent sets in a graph with node degree at most two. Nevertheless, if every supply node supports three or more demand nodes, the computation becomes $\#P$-hard even if every demand node has at most two supply nodes, because counting the number of independent sets in a graph with node degree three is $\#P$-hard \cite{greenhill2000complexity}.
\end{rmk}

\thmNPPath*
\begin{proof}
  We prove that computing such a path is NP-hard even in the following restricted case. Consider a graph where every node has a single supply node, and every supply node fails independently with an identically small probability $p = o(1/m^2)$. The failure probability of a path supported by $\bar m \leq m$ supply nodes is $1 - (1 - p)^{\bar m} = \bar m p + o(p)$.

  Suppose that the most reliable path is supported by $\bar m_{\min}$ supply nodes and has failure probability $p_{\min} = \bar m_{\min} p + o(p)$. For $\epsilon < 1/m \leq 1/\bar m_{\min}$, a path with failure probability strictly smaller than $(1+\epsilon)p_{\min} < (\bar m_{\min} + 1)p + o(p)$ is supported by exactly $\bar m_{\min}$ supply nodes. Computing a path that is supported by the minimum number of supply nodes in this example is NP-hard, which is known as the minimum color path problem in \cite{yuan2005minimum}. Therefore, computing a path that has failure probability within $1 + \epsilon$ times the optimal is NP-hard for $\epsilon < 1/m$.
\end{proof}

\subsection{Approximating the path failure probability by importance sampling}
In this section, we prove the correctness of Algorithm \ref{al:sampling} and Theorem \ref{th:time}.
\begin{lem} \label{th:sample}
The path failure probability is given by $E[I] \sum_{1 \leq k \leq m} \prod_{1 \leq j \leq n_s(v_k)} p(u_j^k)$.
\end{lem}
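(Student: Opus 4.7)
The plan is to view Algorithm \ref{al:sampling} as an importance sampling estimator on an augmented probability space, and show that $E[I]$ exactly equals $\Pr(F)/Z$, where $Z=\sum_{k}\prod_{j}p(u_j^k)$ is the normalizing constant used in Step~2.

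First, I would set up notation. For each node $v_k$ in the path, let $S_k=\{u_j^k\}$ be its supply set and let $A_k$ denote the event that every supply node in $S_k$ fails, so $\Pr(A_k)=\prod_{j}p(u_j^k)$ and $Z=\sum_{k}\Pr(A_k)$. The path fails iff the union $F=\bigcup_{k}A_k$ occurs. I would let $U$ denote the (random) set of failed supply nodes in one execution of the loop, and write $\Pr(U)=\prod_{u\in U}p(u)\prod_{u\notin U}(1-p(u))$ for the probability that the set of failed supply nodes is exactly $U$ under the original product distribution.

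The key calculation is to identify the joint distribution of the sample $(v_k,U)$ drawn in Steps 2--3. Conditioned on selecting $v_k$ (which happens with probability $\Pr(A_k)/Z$), we force every $u\in S_k$ to fail and let the remaining supply nodes fail independently with their own probabilities. Hence, for any $U\supseteq S_k$,
\begin{align*}
\Pr\bigl((v_k,U)\text{ sampled}\bigr)
&=\frac{\Pr(A_k)}{Z}\prod_{u\in U\setminus S_k}p(u)\prod_{u\notin U}(1-p(u))\\
&=\frac{1}{Z}\prod_{u\in U}p(u)\prod_{u\notin U}(1-p(u))=\frac{\Pr(U)}{Z}.
\end{align*}
In particular, the marginal probability of generating a particular failure pattern $U$ is $\Pr(U)/Z$ times the number of indices $k$ for which $S_k\subseteq U$, i.e., the number of failed path nodes induced by $U$.

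The final step is to remove this overcounting via the indicator $I$ in Step 4. Since $I=1$ exactly when the sampled $v_k$ is the \emph{first} failed node along the path given $U$, each $U$ that causes path failure contributes to $E[I]$ through exactly one sampled pair $(v_k,U)$, while each $U$ that does not fail the path contributes zero. Therefore
\begin{align*}
E[I]=\sum_{U:\,F\text{ occurs}}\frac{\Pr(U)}{Z}=\frac{\Pr(F)}{Z},
\end{align*}
and multiplying both sides by $Z=\sum_{1\le k\le m}\prod_{1\le j\le n_s(v_k)}p(u_j^k)$ yields the claimed identity. The only non-routine point is the bookkeeping that each failure pattern $U$ is sampled from multiple $(k,U)$ pairs with total weight $|\{k:S_k\subseteq U\}|\cdot\Pr(U)/Z$, and that the ``first failed node'' convention in Step 4 deterministically selects exactly one of them; everything else is direct substitution.
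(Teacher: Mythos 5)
Your proposal is correct and follows essentially the same route as the paper's proof: both rest on the observation that the joint probability of sampling the pair $(v_k,U)$ telescopes to $\prod_{u\in U}p(u)\prod_{u\notin U}(1-p(u))/Z$, so that the ``first failed node'' indicator picks out exactly one contributing pair per failure pattern $U$ and $E[I]=\Pr(F)/Z$. The paper organizes this slightly differently---first computing the marginal $\Pr(U_r)=m_r\Pr(\text{exactly }U_r\text{ fail})/Z$ and then the conditional $\Pr(I=1\mid U_r)=1/m_r$ before applying the law of total probability---but the underlying calculation is identical to yours.
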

\begin{proof}
Let $V_s$ denote the set of all supply nodes. Let $U$ denote a set of failed supply nodes that lead to the failure of at least one of $\{ v_1, v_2, \dots, v_{m} \}$ (\ie, the failure of the path). Let $\mathcal{U} = \{U_1, U_2, \dots, U_R \}$ denote all the sets of supply node failures that lead to the failure of the path. Let $p(u)$ denote the failure probability of node $u$. Let $\Pr(\text{exactly}~U_r~\text{fail})$ denote the probability that supply nodes $U_r$ fail and all the other supply nodes $V_s \setminus U_r$ do not fail. Since the events that $U_r$ fail while the others do not fail are mutually exclusive for different $r$, the path failure probability is given by
\begin{align}
\Pr(F) &= \sum_{1 \leq r \leq R} \Pr(\text{exactly}~U_r~\text{fail}) \nonumber \\
&= \sum_{1 \leq r \leq R} \prod_{u \in U_r} p(u) \prod_{u \in V_s \setminus U_r}(1 - p(u)). \label{eq:left1}
\end{align}

Let $\{u_j^k | j = 1,\dots, n_s(v_k)\}$ denote the set of supply nodes of $v_k$. Let $m_r$ denote the number of demand node failures, among $\{v_1, v_2, \dots, v_{m} \}$, if supply nodes $U_r$ fail. Then, by summing the failure probabilities of demand nodes $ p(v_k) = \prod_{1 \leq j \leq n_s(v_k)} p(u_j^k)$, $k = 1,\dots,m$, the probability that supply nodes $U_r$ fail is counted $m_r$ times.
\begin{align}
\sum_{1 \leq k \leq m} & \prod_{1 \leq j \leq n_s(v_k)} p(u_j^k) = \sum_{1 \leq r \leq R} m_r \Pr(\text{exactly}~U_r~\text{fail}) \nonumber \\
& = \sum_{1 \leq r \leq R} m_r \prod_{u \in U_r} p(u) \prod_{u \in V_s \setminus U_r}(1 - p(u)).
\label{eq:left2}
\end{align}

We now construct the relationship between the left hand sides of Eq. (\ref{eq:left1}) and Eq. (\ref{eq:left2}) using $E[I]$. In Algorithm \ref{al:sampling}, the value of $I$ in Step 4 depends on both $v_i$ (obtained in Step 2) and $U$ (obtained in Step 3). In the remainder of the proof, we first compute the probability that a specific $U$ is obtained (in an iteration of the main loop), and then compute $\Pr(I = 1 | U)$ (\ie, the probability that $v_i$ is the first failed node given that $U$ fail). As a consequence, $\Pr[I = 1]$ can be determined using the law of total probability.

Consider an iteration of the main loop. Let $\Pr(U_r)$ denote the probability that $U_r$ is obtained in Step 3 of Algorithm \ref{al:sampling}. Let $r(t), t = 1,\dots, m_r$ denote the indices of failed nodes $v_{r(t)}$ among $\{v_1,v_2,\dots,v_m\}$ if $U_r$ fail ($m_r \leq m$). We have
\begin{align}
\Pr(U_r) &= \sum_{1 \leq t \leq m_r} \bigg( \frac{ \prod_{1 \leq j \leq n_s(v_{r(t)})} p(u_j^{r(t)})}{\sum_{1 \leq k \leq m} \prod_{1 \leq j \leq n_s(v_k)} p(u_j^k)} \nonumber \\
& \times \hspace{-5mm} \prod_{u \in U_r \setminus \{ u_j^{r(t)}, 1 \leq j \leq n_s(v_{r(t)}) \}} \hspace{-7mm} p(u) \prod_{u \in V_s \setminus U_r} (1 - p(u)) \bigg) \label{eq:total} \\
&= \sum_{1 \leq t \leq m_r} \frac{\prod_{u \in U_r } p(u) \prod_{u \in V_s \setminus U_r} (1 - p(u))} {\sum_{1 \leq k \leq m} \prod_{1 \leq j \leq n_s(v_k)} p(u_j^k)} \label{eq:uniform} \\
&= \frac{m_r \prod_{u \in U_r } p(u) \prod_{u \in V_s \setminus U_r} (1 - p(u))}{\sum_{1 \leq k \leq m} \prod_{1 \leq j \leq n_s(v_k)} p(u_j^k)}.
\label{eq:sample}
\end{align}
To see this, note that there are $m_r$ choices of $\{v_{r(t)}| t = 1,\dots, m_r \}$ which may lead to $U_r$. In Eq. (\ref{eq:total}), the first term (in the product) is the probability of choosing $v_{r(t)}$ and setting its supply nodes $U^{r(t)} = \{ u_j^{r(t)} | j = 1, \dots, n_s(v_{r(t)}) \}$ to be failed; the second term is the probability that $U_r \setminus U^{r(t)}$ fail; the last term is the probability that the remaining supply nodes $V_s \setminus U_r$ do not fail.

Eq. (\ref{eq:uniform}) implies that $v_{r(t)}, t \in \{1,\dots, m_r\}$ contribute equally to the occurrence of $U_r$. Namely,
\begin{align*}
  ~~~~& \Pr(v_{r(t)} \text{ has been chosen in Step 2} |U_r) \\
  = & \bigg( \frac{ \prod_{1 \leq j \leq n_s(v_{r(t)})} p(u_j^{r(t)})}{\sum_{1 \leq k \leq m} \prod_{1 \leq j \leq n_s(v_k)} p(u_j^k)} \nonumber \\
  & \times \hspace{-3mm} \prod_{u \in U_r \setminus \{ u_j^{r(t)}, 1 \leq j \leq n_s(v_{r(t)}) \}} \hspace{-7mm} p(u)  \prod_{u \in V_s \setminus U_r} (1 - p(u)) \bigg)
  \bigg/ \Pr(U_r) \nonumber \\
  = & 1/m_r,
\end{align*}
for $t \in \{1,\dots, m_r\}$.


Given $U^r$, the probability that $v_{r(1)}$ has been chosen in Step 2 of Algorithm \ref{al:sampling} is $1/m_r$. Thus, $\Pr[I = 1|U_r] = 1/m_r$.
By the law of total probability,
\begin{align*}
\Pr[I = 1] &= \sum_{1 \leq r \leq R} \Pr[I = 1|U_r] \Pr(U_r) \\
&= \frac{\sum_{1 \leq r \leq R} \prod_{u \in U_r } p(u) \prod_{u \in V_s \setminus U_r} (1 - p(u))}{\sum_{1 \leq k \leq m} \prod_{1 \leq j \leq n_s(v_k)} p(u_j^k)}.
\end{align*}

Since $I$ is an indicator variable, $E[I] = \Pr (I = 1)$.
\begin{align*}
& ~ E[I] \sum_{1 \leq k \leq m} \prod_{1 \leq j \leq n_s(v_k)} p(u_j^k) \\
&= \sum_{1 \leq r \leq R} \prod_{u \in U_r } p(u) \prod_{u \in V_s \setminus U_r} (1 - p(u)) \\
&= \Pr(F).
\end{align*}

\end{proof}

Next, we prove that $E[I]$ can be estimated accurately within $3 m \ln (2/\delta)/\epsilon^2$ iterations.
\begin{lem}
$$\Pr\Big(\Big|\frac{E[I] - b/a}{E[I]}\Big| \geq \epsilon E[I]\Big) \leq \delta,$$
where $a = 3 m \ln (2/\delta)/\epsilon^2$ is the number of iterations of the main loop of Algorithm \ref{al:sampling}, $b$ is the number of observations of $I = 1$, and $0 < \epsilon < 1$.
Namely, by repeating $a = 3 m \ln (2/\delta)/\epsilon^2$ times, one obtains an $(\epsilon, \delta)$-approximation of $E[I]$.
\end{lem}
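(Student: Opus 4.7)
The plan is to treat $b$ as a sum of $a$ iid Bernoulli random variables, apply a standard multiplicative Chernoff bound, and then plug in the lower bound $E[I] \ge 1/m$ that is guaranteed by the algorithm's construction.

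First, I would observe that in each iteration of the main loop the algorithm produces an independent copy of the indicator $I$, so writing $b = \sum_{t=1}^{a} I_t$ with $I_t$ iid Bernoulli, we have $\mu := E[b] = a\,E[I]$ and $b/a$ is the natural empirical estimator of $E[I]$. Then I would invoke the multiplicative Chernoff bound for bounded iid variables: for any $0 < \epsilon < 1$,
\[
\Pr\bigl(\lvert b - \mu \rvert \ge \epsilon \mu\bigr) \;\le\; 2\exp\!\left(-\frac{\epsilon^2 \mu}{3}\right).
\]
Dividing inside the probability by $a$ rewrites this exactly as a bound on $|b/a - E[I]|/E[I]$, which is the relative error of the estimator being controlled in the statement.

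The remaining step is to make the right-hand side at most $\delta$. The key input is the lower bound $E[I] = \Pr(I=1) \ge 1/m$, which was already established in the proof of the preceding lemma: conditioning on any supply-node failure set $U$ that disconnects the path, the algorithm's uniform choice among the $m_r \le m$ failed demand nodes yields $\Pr(I=1\mid U) = 1/m_r \ge 1/m$, and the unconditional bound follows by total probability. Substituting $E[I] \ge 1/m$ and $a = 3m\ln(2/\delta)/\epsilon^2$ gives $\mu \ge a/m = 3\ln(2/\delta)/\epsilon^2$, so $\epsilon^2 \mu / 3 \ge \ln(2/\delta)$ and the Chernoff tail is bounded by $2\exp(-\ln(2/\delta)) = \delta$, as required.

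The only place where anything beyond a routine Chernoff calculation is needed is the uniform lower bound $E[I] \ge 1/m$; this is what keeps the required sample size polynomial in $m$ rather than blowing up with $1/\Pr(F)$ as in naive Monte Carlo. Since that bound is already in hand, the rest of the proof is a direct substitution.
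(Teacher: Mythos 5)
Your proposal is correct and follows essentially the same route as the paper's proof: both apply the multiplicative Chernoff bound to $b=\sum_t I_t$, invoke the lower bound $\Pr(I=1)\geq 1/m$ established in the preceding lemma (via $\Pr(I=1\mid U_r)=1/m_r$), and substitute $a=3m\ln(2/\delta)/\epsilon^2$ to get the tail below $\delta$. The only cosmetic difference is that the paper writes the two one-sided Chernoff tails (with constants $2$ and $3$) separately before combining them, whereas you use the two-sided form $2\exp(-\epsilon^2\mu/3)$ directly.
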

\begin{proof}
The proof is based on the Chernoff inequality and is a standard result in estimation theory. From the proof of Lemma \ref{th:sample}, we know that $\Pr(I = 1|U^r) = 1/m^r \geq 1/m$ for all $U^r$. Thus, $\Pr(I = 1) \geq 1/m$. To estimate $E[I]$ within $1 \pm \epsilon$ accuracy ($0 < \epsilon < 1$), let the number of trials be $a = 3 m \ln (2/\delta)/\epsilon^2$.
\begin{align*}
& ~ \Pr \bigg( \bigg| \sum_{1 \leq i \leq a} I_i - \sum_{1 \leq i \leq a} E[I_i] \bigg|
 \geq \epsilon \sum_{1 \leq i \leq a} E[I_i] \bigg)\\
&\leq \exp(-\frac{\epsilon^2 \sum_{1 \leq i \leq a} E[I_i]}{2}) + \exp(-\frac{\epsilon^2 \sum_{1 \leq i \leq a} E[I_i]}{3}) \\
& \leq 2 \exp(-\frac{\epsilon^2 a / m}{3})
= 2 \exp(-\frac{3 \ln(2/\delta)}{3}) \\
& \leq \delta.
\end{align*}
\end{proof}

Finally we prove the time complexity of the algorithm.
\thmtime*
\begin{proof}

  Consider an iteration of the main loop of Algorithm \ref{al:sampling}. In Step 2, obtaining $v_i$ takes $O(m n_s)$ time. In Step 3, obtaining $U$ takes $O(m n_s)$ time, because the total number of supply nodes is at most $O(m n_s)$. In Step 4, testing whether $v_i$ is the first failed node under the failure of $U$ takes $O(m n_s)$ time, given that checking whether a node fail takes $O(n_s)$ time and there are at most $m$ nodes in the path.

  Since $3 m \ln (2/\delta)/\epsilon^2$ iterations are sufficient, the total running time of Algorithm \ref{al:sampling} is $O(m^2 n_s \ln(1/\delta) / \epsilon^2)$.
\end{proof}
\subsection{Bounds on path failure probability}

\begin{lem} \label{th:inequality}
  \begin{equation*}
    1 - (1 - p_1 p_2)^{\alpha \beta} \leq [1 - (1 - p_1)^\alpha] [1 - (1 - p_2)^\beta],
  \end{equation*}
  for $p_1, p_2 \in (0,1), \alpha, \beta \in (0,1]$.
\end{lem}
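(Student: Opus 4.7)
The plan is to reduce this two-parameter inequality to two applications of its one-parameter specialization ($\beta = 1$), which is in turn proved by a single use of Jensen's inequality.

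First, I would establish the auxiliary claim: for every $\gamma \in (0,1]$ and $p, q \in [0,1]$,
$$(1 - pq)^\gamma \geq 1 - q\bigl[1 - (1-p)^\gamma\bigr].$$
The key observation is that $1 - pq = q(1-p) + (1-q)\cdot 1$ writes $1-pq$ as a convex combination of $1-p$ and $1$ with weights $q$ and $1-q$. Since $x \mapsto x^\gamma$ is concave on $[0,1]$ for $\gamma \in (0,1]$, Jensen's inequality immediately gives $(1-pq)^\gamma \geq q(1-p)^\gamma + (1-q) = 1 - q[1-(1-p)^\gamma]$.

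Next, I would chain two uses of this auxiliary inequality to handle general $\beta \in (0,1]$. Let $A = 1-(1-p_1)^\alpha$ and $B = 1-(1-p_2)^\beta$, both in $[0,1]$; the claim reduces to showing $(1-p_1p_2)^{\alpha\beta} \geq 1 - AB$. Applying the auxiliary with $(p,q,\gamma)=(p_1,p_2,\alpha)$ gives $(1-p_1 p_2)^\alpha \geq 1 - p_2 A$, and raising both (nonnegative) sides to the $\beta$-th power yields $(1-p_1p_2)^{\alpha\beta} \geq (1-p_2 A)^\beta$. Applying the auxiliary a second time with $(p,q,\gamma)=(p_2, A, \beta)$---valid because $A \in [0,1]$---gives $(1-p_2 A)^\beta \geq 1 - A[1-(1-p_2)^\beta] = 1 - AB$. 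Concatenating these two bounds closes the argument.

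The hard part is spotting that the two exponents should be peeled off one at a time, with the intermediate quantity $1 - p_2 A$ itself amenable to a second invocation of the one-parameter inequality (now with $A$ playing the role of ``$q$''). A direct single Jensen application to $(1-p_1p_2)^{\alpha\beta}$ against either natural convex-combination decomposition of $1-p_1p_2$ produces a bound that is too weak---for instance, at $p_1=0.1, p_2=0.5, \alpha=\beta=0.5$ the one-shot estimate falls short of the target---so the two-step peeling is essential.
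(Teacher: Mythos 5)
Your proof is correct, and it takes a genuinely different route from the paper's. The paper proves the lemma via the mean value theorem: writing $g(x) = (1-x)^\gamma - (1-\gamma x)$ it derives the representation $1-(1-x)^\gamma = \gamma x (1-\xi)^{\gamma-1}$ for some intermediate point $\xi \in (0,x)$, applies this to all three expressions $1-(1-p_1)^\alpha$, $1-(1-p_2)^\beta$, and $1-(1-p_1p_2)^{\alpha\beta}$, and then must argue that the intermediate point for $p_1p_2$ is no larger than those for $p_1$ and $p_2$ (which requires a separate strict-concavity argument showing $\xi$ increases with $x$, plus special handling of $\gamma=1$), finishing with the exponent comparison $\alpha+\beta-2 \leq \alpha\beta-1$. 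Your argument replaces all of this with two direct invocations of the concavity of $t \mapsto t^\gamma$ applied to the convex-combination identity $1-pq = q(1-p) + (1-q)\cdot 1$, peeling off $\alpha$ and then $\beta$, with the reuse of $A = 1-(1-p_1)^\alpha$ as the second mixing weight being the key trick. Each step checks out: the auxiliary inequality is an immediate consequence of concavity, $1-p_2A \geq 0$ so raising to the $\beta$-th power preserves the inequality, and $A \in [0,1]$ makes the second application legitimate. Your version is shorter, avoids calculus entirely beyond concavity, and sidesteps the paper's somewhat delicate tracking of intermediate points; the paper's version yields the sharper intermediate identity $1-(1-x)^\gamma = \gamma x(1-\xi)^{\gamma-1}$ as a byproduct, but that is not needed elsewhere.
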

\begin{proof}
  Let $$ g(x) = (1-x)^\gamma - (1 - \gamma x),$$
  for $x \in [0,1), \gamma \in (0,1]$.

  By taking the derivatives,
  $$ g'(x) = - \gamma (1 - x)^{\gamma - 1} + \gamma;$$
  $$ g''(x) = \gamma(\gamma - 1)(1 - x)^{\gamma - 2}.$$

  Since $g(0) = 0$, according to the mean value theorem,
  $$ g(x) = g'(\xi) x, $$
  where $ 0 < \xi \leq x < 1$. Substituting $g(x)$ and $g'(\xi)$,
  \begin{eqnarray}\label{eq:sub}
  (1 - x)^\gamma - (1 - \gamma x) &=& g'(\xi) x \nonumber \\
    1 - (1 - x)^\gamma &=& \gamma x - g'(\xi) x \nonumber \\
    1 - (1 - x)^\gamma &=& \gamma x (1 - \xi)^{\gamma - 1}.
  \end{eqnarray}

  Given $g''(x) < 0$ for $x,\gamma \in (0,1)$, $g(x)$ is strictly concave for $x, \gamma \in (0,1)$. For $0 < x_1 < x_2 < 1$,
  \begin{eqnarray*}
    g(x_2) &<& g(0) +  \frac{g(x_1) - g(0)}{x_1} x_2, \\
    g(x_2)/x_2 &<& g(x_1)/x_1.
  \end{eqnarray*}
  Given that $g(x_1) = g'(\xi_1) x_1$, $g(x_2) = g'(\xi_2) x_2$, $0 < \xi_1 < x_1$, $0 < \xi_2 < x_2$,
  and that $g'(x)$ is decreasing in $x$, we have $\xi_1 < \xi_2$. If $\gamma = 1$, then $g(x) = g'(x) = 0$ for $x \in [0,1)$. Clearly, there also exist $\xi_1 < \xi_2$ such that $g(x_1) = g'(\xi_1) x_1$, $g(x_2) = g'(\xi_2) x_2$.

  Applying Eq. (\ref{eq:sub}),
  \begin{eqnarray*}
    &~&[1 - (1 - p_1)^\alpha][1 - (1 - p_2)^\beta] \\
    &=& \alpha p_1 (1 - \eta_1)^{\alpha - 1} \beta p_2 (1 - \eta_2)^{\beta - 1} \\
    &=& \alpha \beta p_1 p_2 (1 - \eta_1)^{\alpha - 1} (1 - \eta_2)^{\beta - 1},
  \end{eqnarray*}
  for $0< \eta_1 < p_1$, $0 < \eta_2 < p_2$, and
  \begin{eqnarray*}
    1 - (1 - p_1 p_2)^{\alpha \beta}
    = \alpha \beta p_1 p_2 (1 - \eta_3)^{\alpha \beta - 1},
  \end{eqnarray*}
  for $0 < \eta_3 < p_1 p_2$. Moreover, $\eta_3 \leq \min(\eta_1, \eta_2)$, because $p_1 p_2 \leq \min(p_1,p_2)$.

  Given $0 < \alpha, \beta \leq 1$,
  \begin{eqnarray*}
    &~&(1 - \eta_1)^{\alpha - 1} (1 - \eta_2)^{\beta - 1} \\
    &\geq& (1 - \eta_3)^{\alpha - 1} (1 - \eta_3)^{\beta - 1} \\
    &=& (1-\eta_3)^{\alpha + \beta - 2} \\
    & \geq & (1 - \eta_3)^{\alpha \beta - 1},
  \end{eqnarray*}
  where the first inequality follows from the fact that $h_1(x) = (1 - x)^\alpha$ is decreasing in $x$ if $x \in (0,1)$ and $\alpha \in (0,1]$, and the last inequality follows from
  \begin{eqnarray*}
    \alpha(1 - \beta) &\leq& 1 - \beta, \\
    \alpha + \beta - 2 & \leq& \alpha \beta - 1,
  \end{eqnarray*}
  and $h_2(x) = (1 - \eta_3)^x$ is decreasing in $x$ for $\eta_3 \in (0,1)$.

  Therefore,
  \begin{equation*}
    1 - (1 - p_1 p_2)^{\alpha \beta} \leq [1 - (1 - p_1)^\alpha] [1 - (1 - p_2)^\beta].
  \end{equation*}
\end{proof}

\begin{lem} \label{th:inequality0}
  $$1 - p(v_i) \geq (1 - \tilde p(v_i))^{n_d^{n_s}},$$
  where $\tilde p(v_i)$ is defined before Lemma \ref{th:lower}.
\end{lem}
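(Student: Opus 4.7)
The plan is to reduce the claim to an induction on the number of supply nodes of $v_i$, using Lemma \ref{th:inequality} as the key step of the induction. To set up notation, write $\tilde p_j = \tilde p(u_j^i)$ and $n_j = n_d(u_j^i)$, so that $p(u_j^i) = 1 - (1-\tilde p_j)^{n_j}$, and observe that
\[
\tilde p(v_i) = \prod_{j=1}^{k} \tilde p_j, \qquad p(v_i) = \prod_{j=1}^{k} \bigl[1-(1-\tilde p_j)^{n_j}\bigr],
\]
where $k = n_s(v_i) \le n_s$. Since $1 - \tilde p(v_i) \in [0,1]$ and $n_d^{n_s} \ge \prod_{j=1}^{k} n_j$, it suffices to prove the stronger statement
\[
\Bigl(1 - \prod_{j=1}^{k} \tilde p_j\Bigr)^{\prod_{j=1}^{k} n_j} \;\le\; 1 - \prod_{j=1}^{k}\bigl[1-(1-\tilde p_j)^{n_j}\bigr],
\]
and then raise the base $1-\tilde p(v_i)$ to the larger exponent $n_d^{n_s}$ on the left side.

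I would prove this inequality by induction on $k$. The base case $k=1$ is a trivial equality. For the inductive step, let $A = \prod_{j=1}^{k}\tilde p_j$ and $N = \prod_{j=1}^{k} n_j$, and assume $(1-A)^N \le 1 - \prod_{j=1}^{k}[1-(1-\tilde p_j)^{n_j}]$, equivalently $\prod_{j=1}^{k}[1-(1-\tilde p_j)^{n_j}] \le 1 - (1-A)^N$. Multiplying both sides by the nonnegative quantity $1-(1-\tilde p_{k+1})^{n_{k+1}}$ gives
\[
\prod_{j=1}^{k+1}\bigl[1-(1-\tilde p_j)^{n_j}\bigr] \;\le\; \bigl[1-(1-A)^N\bigr]\bigl[1-(1-\tilde p_{k+1})^{n_{k+1}}\bigr].
\]

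Now apply Lemma \ref{th:inequality} with the substitutions $p_1 = 1-(1-A)^N$, $\alpha = 1/N$, $p_2 = 1-(1-\tilde p_{k+1})^{n_{k+1}}$, $\beta = 1/n_{k+1}$. Since $N \ge 1$ and $n_{k+1} \ge 1$, the hypothesis $\alpha,\beta \in (0,1]$ holds, and the substitutions are chosen precisely so that $1-(1-p_1)^\alpha = A$ and $1-(1-p_2)^\beta = \tilde p_{k+1}$. Lemma \ref{th:inequality} then yields
\[
1 - \bigl(1 - p_1 p_2\bigr)^{1/(N n_{k+1})} \;\le\; A\, \tilde p_{k+1},
\]
which after rearranging and raising both sides to the $N n_{k+1}$-th power gives $(1 - A\tilde p_{k+1})^{N n_{k+1}} \le 1 - p_1 p_2$. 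Since the right-hand side is exactly an upper bound on $\prod_{j=1}^{k+1}[1-(1-\tilde p_j)^{n_j}]$ above, this closes the induction with $A \tilde p_{k+1} = \prod_{j=1}^{k+1}\tilde p_j$ and $N n_{k+1} = \prod_{j=1}^{k+1} n_j$.

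The main obstacle is recognizing which substitution into Lemma \ref{th:inequality} produces the right cancellations; the parameters $\alpha = 1/N$ and $\beta = 1/n_{k+1}$ are forced if one wants the awkward composite exponents $N, n_{k+1}$ on the inside to collapse to the natural quantities $A, \tilde p_{k+1}$ on the outside. Once the bookkeeping is set up, each step is mechanical: monotonicity of $x \mapsto x^c$ on $[0,1]$ for $c \ge 0$, the induction hypothesis, and a single application of Lemma \ref{th:inequality} suffice.
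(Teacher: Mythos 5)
Your proof is correct and follows essentially the same route as the paper's: an induction over the supply nodes of $v_i$, applying Lemma \ref{th:inequality} once per step with reciprocal exponents so that the accumulated product collapses back to $\prod_j \tilde p_j$. The only difference is cosmetic bookkeeping — you carry the exact exponents $n_d(u_j^i)$ and obtain the slightly sharper intermediate bound $(1-\tilde p(v_i))^{\prod_j n_d(u_j^i)} \le 1 - p(v_i)$ before weakening to $n_d^{n_s}$, whereas the paper replaces each $n_d(u_j^i)$ by $n_d$ from the outset.
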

\begin{proof}
  For $n_s = 1$, every node has a single supply node. Let $u_j^i$ be the supply node of $v_i$. Since $1 - p(u_j^i) \geq (1 - \tilde p(u_j^i))^{n_d}$ for any supply node $u_j^i$, the result trivially holds.

  We next focus on the case where $n_s \geq 2$. Recall that $p(v_i) = \prod_{u_j^i}p(u_j^i)$ and $\tilde p(v_i) = \prod_{u_j^i}\tilde p(u_j^i)$ (\ie, a demand node fails if and only if all of its supply nodes fail), where $u_j^i$ are the supply nodes of $v_i$.

  Consider two supply nodes of $v_i$ and let $p(u_1^i)$ and $p(u_2^i)$ be their failure probabilities. Moreover, $\tilde p(u_1^i)$ and $\tilde p(u_2^i)$ satisfy $\tilde p(u_1^i) \geq 1 - (1 - p(u_1^i))^{1/n_d}$ and $\tilde p(u_2^i) \geq 1 - (1 - p(u_2^i))^{1/n_d}$.

  Then,
  \begin{align*}
    1-\tilde p(u_1^i) \tilde p(u_2^i) & \leq 1 - [(1 - (1 - p(u_1^i))^{1/n_d})\\
    & ~~~~~~~~~(1 - (1 - p(u_2^i))^{1/n_d})] \\
    & \leq (1 - p(u_1^i) p(u_2^i))^{1/n_d^2},
  \end{align*}
  where the last inequality follows from Eq. (\ref{eq:logprob}), by letting $p_1 = p(u_1^i), p_2 = p(u_2^i), \alpha, \beta = 1/n_d$, which we proved in Lemma \ref{th:inequality}.
  \begin{equation}\label{eq:logprob}
    1 - (1 - p_1 p_2)^{\alpha \beta} \leq [1 - (1 - p_1)^\alpha] [1 - (1 - p_2)^\beta],
  \end{equation}
  for $p_1, p_2 \in (0,1), \alpha, \beta \in (0,1]$.

  Consider the third supply node of $v_i$ which has failure probability $p(u_3^i)$. We have $\tilde p(u_3^i) \geq 1 - (1 - p(u_3^i))^{1/n_d}$. Moreover, notice that $\tilde p(u_1^i) \tilde p(u_2^i) \geq 1 - (1 - p(u_1^i) p(u_2^i))^{1/n_d^2}$. By letting $p_1 = p(u_1^i) p(u_2^i), p_2 = p(u_3^i), \alpha = 1/n_d^2, \beta = 1/n_d$ in Eq. \ref{eq:logprob}, we have
  $$ 1-\tilde p(u_1^i) \tilde p(u_2^i) \tilde p(u_3^i)\leq (1 - p(u_1^i) p(u_2^i) p(u_3^i))^{1/n_d^3}.$$

  By repeating the process until all the supply nodes of $v_i$ are considered, and let $n_s(v_i) \leq n_s$ denote the number of supply nodes of $v_i$, we have
  \begin{align*}
    1 - \tilde p(v_i) \leq & (1 - p(v_i))^{1/n_d^{n_s(v_i)}} \\
    \leq & (1 - p(v_i))^{1/n_d^{n_s}}.
  \end{align*}
\end{proof}
\typeout{}
\bibliographystyle{IEEEtran}
\bibliography{vertexcut.bib}
\end{document}